\newtheorem{theorem}{Theorem}
\newtheorem{proposition}{Proposition}
\newtheorem{proof}{Proof}[section]
\newtheorem{remark}{Remark}
\begin{document}
	\title{Converged Deep Framework Assembling Principled Modules for CS-MRI}
	
	\author{Risheng Liu,~\IEEEmembership{Member,~IEEE,}
		Yuxi~Zhang,
		Shichao~Cheng,
		Zhongxuan~Luo,
		and~Xin~Fan,~\IEEEmembership{Senior~Member,~IEEE}
		\thanks{This work was supported in part by the National Natural Science Foundation of China (NSFC) under Grant Nos 61922019, 61672125, 61572096 and 61632019, the Youth Tiptop Talent project in Liaoning Province (XLYC1807088) and the Fundamental Research Funds for the Central Universities. \emph{(Corresponding author: Xin Fan).}}
		\thanks{R. Liu, Y. Zhang, Z. Luo and X. Fan are with the DUT-RU International School of Information Science and Technology, Dalian University of Technology, Dalian, 116620, China, and also with the Liaoning Key Laboratory of Ubiquitous Network and Service Software (e-mail: rsliu@dlut.edu.cn; yuxizhang@mail.dlut.edu.cn; zxluo@dlut.edu.cn; xin.fan@ieee.org).}
		\thanks{S. Cheng is with the School of Computer Science and Technology, Hangzhou Dianzi University, Hangzhou, 310018, China (e-mail: shichao.cheng@outlook.com).}
		\thanks{Z. Luo is also with the Institute of Artificial Intelligence, Guilin University of Electronic Technology, Guilin, China}
		}
	
	\markboth{Journal of \LaTeX\ Class Files,~Vol.~14, No.~8, August~2015}%
	{Shell \MakeLowercase{\textit{et al.}}: Bare Demo of IEEEtran.cls for IEEE Journals}

	\maketitle
	
	\begin{abstract}
	Compressed Sensing Magnetic Resonance Imaging (CS-MRI) significantly accelerates MR data acquisition at a sampling rate much lower than the Nyquist criterion. A major challenge for CS-MRI lies in solving the severely ill-posed inverse problem to reconstruct aliasing-free MR images from the sparse k-space data. Conventional methods typically optimize an energy function, producing reconstruction of high quality, but their iterative numerical solvers unavoidably bring extremely slow processing. Recent data-driven techniques are able to provide fast restoration by either learning direct prediction to final reconstruction or plugging learned modules into the energy optimizer. Nevertheless, these data-driven predictors cannot guarantee the reconstruction following constraints underlying the regularizers of conventional methods so that the reliability of their reconstruction results are questionable. In this paper, we propose a converged deep framework assembling principled modules for CS-MRI that fuses learning strategy with the iterative solver of a conventional reconstruction energy. This framework embeds an optimal condition checking mechanism, fostering \emph{efficient} and \emph{reliable} reconstruction. We also apply the framework to two practical tasks, \emph{i.e.}, parallel imaging and reconstruction with Rician noise. Extensive experiments on both benchmark and manufacturer-testing images demonstrate that the proposed method reliably converges to the optimal solution more efficiently and accurately than the state-of-the-art in various scenarios. 
	\end{abstract}
	\begin{IEEEkeywords}
		Compressed sensing, MRI reconstruction, Deep learning, Optimization, Parallel imaging, Rician noise
	\end{IEEEkeywords}
	
	\IEEEpeerreviewmaketitle
	
	\section{Introduction}

	\IEEEPARstart{M}{agnetic} Resonance Imaging (MRI) is a widely applied none-invasive technology that reveals both functional and anatomical information. Unfortunately, traditional MRI  suffers from inherently slow acquisition from a large volume of data in the k-space, \emph{i.e.}, Fourier space~\cite{lustig2008compressed}. Researchers introduce the compressed sensing (CS) theory into MRI reconstruction, allowing fast acquisition at a sampling rate much lower than the Nyquist rate without degrading image quality. The main challenge for CS-MRI is the illness of the inherited inverse problem to estimate images of high quality from under-sampled data points. Moreover, the existence of acquisition noise and computation of discrete Fourier transformation deteriorate the reconstruction stability.

	Conventional approaches to CS-MRI devote significant efforts to incorporating prior knowledge on images as regularizers in order to restore details as well as suppress artifacts. Sparsity constraints in a specific transform domain are the most commonly used~\cite{block2007undersampled,qu2012undersled,lustig2007sparse,gho2010three}. Dictionary learning based methods consider the sparsity in the subspace spanned by reference images~\cite{ravishankar2011mr,babacan2011reference,zhan2016fast}. Non-local paradigms take the advantage of the sparse representation by grouping similar local image patches~\cite{qu2014magnetic,eksioglu2016decoupled}. These methods derived from certain prior models are able to find a stable and satisfactory reconstruction by optimizing an energy functional with various forms of regularizers. Nevertheless, this optimization typically demands iterative calculation of gradient information in a high-dimensional image space, suffering from rather slow process. 
	
	Recent deep learning based approaches directly apply pre-trained deep architectures to infer MR images from degraded sampled inputs~\cite{wang2016accelerating,lee2017deep,schlemper2018deep}, rendering fast reconstruction. Nevertheless, these methods upon direct deep mappings are not so stable as conventional ones to data variations because they abandon the principled prior knowledge and totally depend on training examples. By taking advantage of both domain knowledge and available data, hybrid paradigms attempt to bridge the gap between the data learning and numerical optimization for functional energy with regularizers. For instance, Sun~\emph{et~al.}~\cite{sun2016deep} and Diamond~\emph{et~al.}~\cite{diamond2017unrolled} unroll the iterative optimization as the prediction process using deep networks. Unfortunately, this type of \emph{ad~hoc} combinations breaks the convergence of optimization iterations so that it cannot guarantee the optimal solution to the original objective energy. Figure~\ref{fig:CmpWithNet} illustrates an example of reconstruction from the latest deep learning based method~\cite{quan2018compressed} where the circled anatomic structures mistakenly appear artifacts while smearing part of tissues, though increasing PSNR from $26.25$ to $31.74$. Therefore, neither medical research nor clinical diagnosis can trust reconstruction without theoretical guarantee.   
	
	Existing methods fail to provide an \emph{efficient} or \emph{reliable} treatment for CS-MRI. In this study, we design a converged deep framework comprising modules assembled in principle for trustworthy fast MR reconstruction. Specifically, we provide an efficient iterative process by integrating the task-driven domain knowledge with the data-driven deep networks. Meanwhile, we introduce an automatic feedback mechanism that leads deep iterations \emph{converging} to the optimal solution of the original energy. Further, we consider two extensive tasks in practical scenarios,~\emph{i.e.}, parallel imaging and Rician noise pollution, to investigate the robustness of our paradigm for real-world applications. Our framework flexibly adapts to the energies targeting to these scenarios, and robustly finds the solutions. Our contributions can be listed as follows:
	\begin{itemize}
		\item We propose a generic deep framework assembling modules in principle to efficiently and reliably solve the CS-MRI model, typically non-convex and non-smooth.
		\item We establish an automatic feedback mechanism in virtue of the first-order optimality condition to reject improper predictions from embedded deep architectures and to guide the deep iterations towards the desired solution. 
		\item We custom-tailor two extensions of our framework to handle the case of parallel imaging and of Rician noise contamination in practical scenarios. 
	\end{itemize}
	
	Extensive experimental results on benchmarks and raw data collected in the $k$-space of an MRI equipment demonstrate the superiority of the proposed paradigm against state-of-the-art techniques in both reconstruction accuracy and efficiency, as well as the robustness to noise contamination.
		\begin{figure}[!tbp]
		\begin{center}
			\begin{tabular}{c@{\extracolsep{-1em}}c@{\extracolsep{0.2em}}c@{\extracolsep{0.2em}}c@{\extracolsep{0.2em}}c}
				&\includegraphics[width=.12\textwidth]{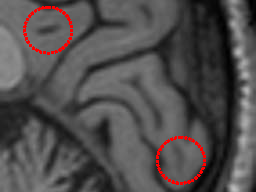}
				&\includegraphics[width=.12\textwidth]{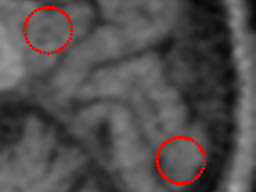}
				&\includegraphics[width=.12\textwidth]{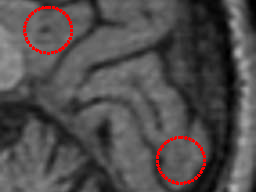}
				&\includegraphics[width=.12\textwidth]{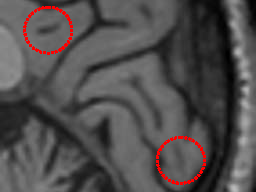}\\
				&Ground Truth & Zero-Filling & RefineGAN & Ours \\
			\end{tabular}
		\end{center}
		\caption{A reconstruction example from naive zero-filling, recent RefineGAN, and ours. Undesired artifacts and smoothing details in red circles emerge in the result of RefineGAN.    }
		\label{fig:CmpWithNet}
	\end{figure}
\begin{figure*}[!tbp]
	\begin{center}
		\begin{tabular}{c@{\extracolsep{-1em}}c@{\extracolsep{0.2em}}c}
			&\includegraphics[width=1\textwidth]{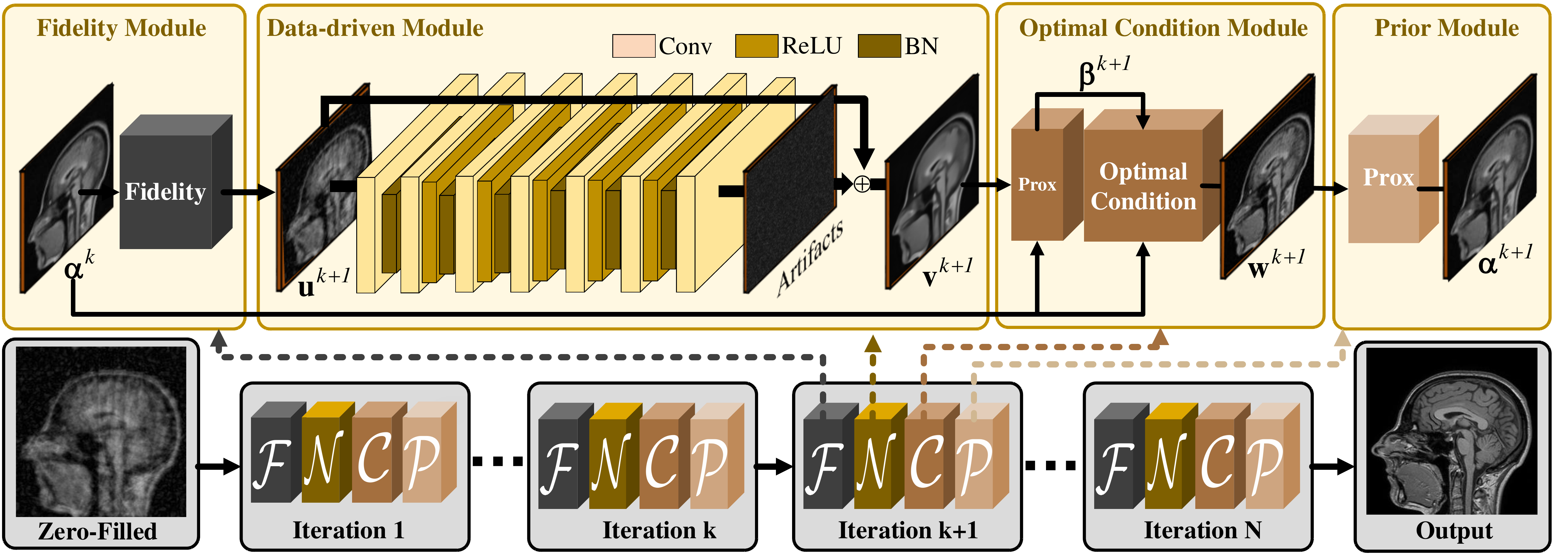}
		\end{tabular}
	\end{center}
	\caption{The fidelity $\mathcal{F}$, data-driven $\mathcal{N}$, optimal condition $\mathcal{C}$, and prior $\mathcal{P}$ modules forms one iteration of our reconstruction framework. }
	\label{fig:FlowChart}
\end{figure*}
	\section{Related Work}
	This section reviews approaches to CS-MRI reconstruction based on prior models and deep learning, followed by the algorithms for two practical reconstruction scenarios, \emph{i.e.}, parallel imaging and Rician noise removal. 
	\subsection{Model based CS-MRI}
	Conventional CS-MRI needs to solve an energy optimization with regularizations derived from image priors in a specific transform domain or subspace. This term plays a vital role in the reconstruction quality. Researchers have devoted tremendous efforts to exploring the sparsity prior in various transform domains including the gradient~\cite{block2007undersampled}, discrete Cosine transform~\cite{lustig2007sparse}, partial Fourier transform (RecPF)~\cite{yang2010fast}, and wavelet transform domains~\cite{qu2012undersled}. These techniques, developed upon the sparse representation with a fixed set of basis functions, can hardly provide satisfactory quality. Alternatively, researchers resort to the data-driven sparse prior in the subspace spanned by an over-complete dictionary of reference images, performing better in details recovery and artifacts removal~\cite{ravishankar2011mr,babacan2011reference,zhan2016fast,qu2014magnetic,eksioglu2016decoupled}. These dictionary based CS-MRI methods are able to generate reliable or plausible reconstruction following the underlying physical prior or constraint, but typically suffer from extremely slow process as the energy optimization turns out to be a computationally expensive iterative process. 
	
	Additionally, most of regularization terms derive the sparsity prior in the forms of the $l_p(0\le p \le 1)$ norm. The norm has nice mathematical properties, but may not be able to characterize complex data distributions in real applications. Also, simply concatenating reference images into an over-complete dictionary as the data dependent prior has the limitation in handling flexible structures of real imaging problems.
	
	\subsection{Deep learning based CS-MRI}
	Recent studies introduce deep learning techniques into MRI reconstruction in order to utilize the information given by vaults of available data examples. Wang~\emph{et~al.} design and train a convolutional neural network (CNN) to learn the direct mapping from the input under-sampled $k$-space data to the desired fully-sampled image~\cite{wang2016accelerating}. A modified U-Net architecture is used to learn the aliasing artifacts in \cite{lee2017deep} and a deep cascading of convolutional network is applied to accelerate MR imaging in\cite{schlemper2018deep}.Lately, generative adversarial networks (GANs) are able to remove aliasing artifacts by generated details, and gain the state-of-the-art quality for a specific benchmark~\cite{quan2018compressed,yang2017dagan,mardani2018deep}. These methods provide an extremely fast feed-forward inference for reconstruction, but highly depend on training examples without any principled knowledge, failing to give the desired solution when the input deviates from the mode of training data distribution.

	There exist hybrid techniques that integrate deep architectures into an iterative optimization process for the reconstruction energy with prior knowledge. Sun~\emph{et~al.} present ADMM-Net for CS-MRI by introducing learnable architectures into the Alternating Direction Method of Multipliers (ADMM) iterations~\cite{sun2016deep}. A similar unrolling scheme is developed in~\cite{diamond2017unrolled}. These methods upon \emph{deep} priors can improve the reconstruction accuracy and reliability at a relatively faster speed. Unfortunately, the unrolling schemes that directly replace iterations by deep architectures cannot guarantee the convergence of the iterative optimization process. Moreover, no mechanism is available to control the errors generated by nested structures. Therefore, we can still hardly rely on these reconstructed results in medical analysis and diagnosis.
	 	
	\subsection{Parallel imaging algorithms}
	
	Parallel imaging (PI) techniques, similar with CS-MRI, are also commonly used in clinic applications to accelerate MRI acquisition by using different sensitivities received from multiple coils installed at different locations. For the sake of saving time, PI conducts spatial encoding in the form of receiver coils with various sensitivity profiles instead of gradient encoding. SENSE and its variants directly restore the images from under-sampled data in the image domain~\cite{pruessmann1999sense,ying2007joint}. Another line of studies is to first interpolate the missing data in the frequency domain, and then to transfer the results to the image domain, \emph{e.g.}, GRAPPA\cite{griswold2005parallel}, SPIRiT~\cite{lustig2010spirit}, and their derived techniques. Recent approaches combine a CS energy model with PI in order to further speed up the acquisition such as CS-SENSE~\cite{liang2009accelerating}, CS-GRAPPA~\cite{chang2012nonlinear}, $L_1$-SPIRiT~\cite{murphy2012fast}, and SAKE~\cite{shin2014calibrationless}. It is worth investigating how these energy-based PI approaches may benefit from deep learning.	
	\subsection{Rician noise removal}
	Independent zero-mean Gaussian noise with equal variances may simultaneously corrupt both real and imaginary parts of the complex data in the $k$-space during practical MR acquisition~\cite{rajan2012adaptive}. Typical MR imaging techniques convert the original complex Gaussian noise to Rician noise when computing the magnitude of the complex $k$-space data~\cite{gudbjartsson1995rician}. The non-additive Rician noise challenges traditional noise removal techniques. Researchers designate special treatments targeting at Rician noise removal, \emph{e.g.}, nonlocal-means algorithms\cite{wiest2008rician,manjon2010adaptive}, wavelet packet\cite{wood1999wavelet}, wavelet domain filtering\cite{nowak1999wavelet}, and total variation(TV) denoisers~\cite{chen2015convex,liu2016variational}. Deep learning based denoisers are also developed to tackle Rician noise~\cite{sun2016deep,yang2017dagan}. Nevertheless, it is still an open problem to fuse Rician noise removal with efficient CS-MRI reconstruction in a unified framework for practical MR acquisition. 
	

	\section{Method}
	We first give the problem formulation of CS-MRI, and present our deep framework. Subsequently, we detail the reconstruction process that converges to the optimum of the original energy, and finally provide the theoretical analysis.  
	\subsection{Problem formulation}
	According to the CS theory~\cite{eksioglu2016decoupled}, typical CS-MRI reconstruction attempts to recover a fully-sampled image $\mathbf{x}$ from under-sampled $k$-space data $\mathbf{y}$ as:
	\begin{equation}
	\min\limits_{\mathbf{x}}{ \bm{\Phi}(\mathbf{x})}\quad s.t.\quad \mathbf{PFx=y},
	\label{eq:AbstractModel}
	\end{equation}
	where $\mathbf{y} \in\mathbb{C}^M $ represents the under-sampled observation in the $k$-space, and $ \mathbf{x} \in\mathbb{C}^N (M<<N)$ denotes the column vector of the desired MR image to be reconstructed. The Fourier transform and under-sampling operation are denoted as $ \mathbf{F} $ and $ \mathbf{P} \in \mathbb{R}^{M\times N}$, respectively. The composite operator $\mathbf{PF}$ constitutes a linear operation $ \mathbb{C}^N \to \mathbb{C}^M $. $\bm{\Phi}(\mathbf{x})$ represents the regularization imposed on the ill-posed inverse problem for reconstruction in order to constrain the searching space of the desired solution $\mathbf{x}$. 
	
	Existing CS-MRI techniques usually solve the energy optimization~(\ref{eq:AbstractModel}) by iterations upon gradient information with the prior penalty. These methods suffer from expensive time consumption owing to the iterative gradient calculations. Alternatively, an end-to-end inference upon deep networks can achieve efficient MR reconstruction from under-sampled data. These direct approaches without explicit constraints from prior knowledge lack reliability or robustness to data variations. Hybrid strategies integrate both principled knowledge and deep architectures into the optimization procedure, but this simple combination has no theoretical guarantee on the convergence so that the reliability issue has not been resolved yet. Such trade of reliability for efficiency may produce serious consequence in medical analysis and diagnosis because one cannot tell whether a reconstruction algorithm restores authentic anatomical structures or fabricates/hallucinates details upon an available data distribution. In this work, we propose a deep reconstruction framework with theoretical guarantee on convergence, enabling both efficiency and robustness.


	
	\subsection{Deep optimization framework}
	We consider the classical CS-MRI model with the sparsity regularization on the wavelet basis:		
	\begin{equation}
	\bm{\alpha}\in
	\arg\min\limits_{\bm{\alpha}}
	\ \frac{1}{2}\|\mathbf{PFA}\bm{\alpha}\!-\!\mathbf{y}\|_2^2\!+
	\!\lambda\|\bm{\alpha}\|_p,
	\label{eq:OriginalModel}
	\end{equation}
	where $\bm{\alpha}$ denotes the sparse code of $\mathbf{x}$ on the wavelet basis $\mathbf{A}$~($\mathbf{x}=\mathbf{A}\bm{\alpha}$), and $\lambda $ indicates the trade-off parameter. We use the $\ell_p$ regularization with $p \in(0,1)$ so that the problem turns out to be challenging non-convex sparse optimization. To tackle the challenge, our framework embraces imaging principles, deep inference, and sparsity priors, either of which was proved to be critical for effective reconstruction in previous studies. Accordingly, we design the fidelity, data-driven, and prior modules in every optimization iteration to exploit these three types of information. Further, the  framework embeds an optimality conditional module that automatically checks the output of each iteration and rejects the improper one leading to the undesired result. This module maintains the convergence, yielding a reliable solution to the optimization.	
	

	The top row of Fig.~\ref{fig:FlowChart} demonstrates the cascade of all ingredients at each iteration of the optimization for our framework, and we elaborate the four modules in the order shown as Fig.~\ref{fig:FlowChart}.
	
	\textbf{Fidelity module:} The fidelity term reveals the image formation process of MR imaging in~\eqref{eq:OriginalModel}, which is necessary for reconstruction algorithms. This module in our framework constitutes a mapping at the ($k$+1)th iteration from the previous output $\bm{\alpha}^{k}$~(initialized as the degraded input $\bm{\alpha}^{0}= \mathbf{A}^\mathrm{T}\mathbf{F}^\mathrm{T}\mathbf{P}^\mathrm{T}\mathbf{y}$) to an intermediate reconstructed image $\bm{u}^{k+1}$, characterizing the inverse imaging process as: 
	\begin{equation}
	\mathbf{u}^{k+1} = \mathcal{F}\left(\bm{\alpha}^{k};\theta_{\mathcal{F}}\right),
	\label{eq:fidelity}
	\end{equation}
	where $\theta_{\mathcal{F}}$ denotes the parameters for the mapping. The calculation of the mapping $\mathcal{F}$ and its parameters $\theta_{\mathcal{F}}$ will be detailed later in the reconstruction process.

	\textbf{Data-driven module:} Networks provide a flexible mechanism for characterizing underlying structures shared by numerous training pairs. Our framework consists of deep architectures encoding these image priors. We regard the artifacts introduced by the fidelity mapping $\mathcal{F}$ as unknown noise, and thus employ a residual denoising network with shortcuts (IRCNN)~\cite{zhang2017learning} as the data-driven module of our framework:  
	\begin{equation}
	\mathbf{v}^{k+1} = \mathcal{N}\left(\mathbf{u}^{k+1};\theta_{\mathcal{N}}^{k+1}\right),
	\end{equation}
	where $\theta_{\mathcal{N}}^{k+1}$ denotes the parameters of the residual network in the ($k$+1)th stage. We will give a detailed depiction of the choice for this CNN-based denoiser later. 
	
	\textbf{Optimal condition module:} One major issue arises with the  data-driven module whether the inference of the deep network $\mathcal{N}$ follows the direction converging to the optimum of the model energy. This module equipped with a checking mechanism resolves this issue, via automatically checking and rejecting the improper output of the data-driven module:
	%
	\begin{equation}
	\mathbf{w}^{k+1}= \mathcal{C}( \mathbf{v}^{k+1} ,\bm{\alpha}^{k}; \theta_{\mathcal{C}}), 
	\label{eq:w}
	\end{equation}
	where $\mathbf{v}^{k+1}$ and $\bm{\alpha}^{k}$ are the outputs of the data-driven module at this stage and of the previous iteration, respectively. $ \theta_{\mathcal{C}}$ denotes the parameters for the optimality condition. This module calculates an indicator using the first-order optimality condition that determines whether to accept the updating from deep architectures. 
	
	\textbf{Prior module:}
	The sparsity prior upon a specific basis is necessary to constrain the solution space in traditional CS-MRI. The prior naturally describes the intrinsic mutual characteristics of MR images so that reconstruction can benefit from incorporating this prior. Thus, we append a prior module to the condition module to further improve restoration quality:
	\begin{equation}
	\bm{\alpha}^{k+1} =\mathcal{P}\left( \mathbf{w}^{k+1};\theta_{\mathcal{P}}\right),
	\end{equation}
	where $\theta_{\mathcal{P}}$ denotes the parameters for the prior. Consequently, we are able to include domain knowledge for MR images, and recover more details.
	We apply an iterative process with all these four modules cascaded to solve the optimization~\eqref{eq:OriginalModel} as shown in the bottom row of Fig.~\ref{fig:FlowChart}. The converged optimum $\bm{\alpha}^{*}$, found by the process, gives the final reconstruction.

	
	\subsection{Reconstruction process}
	We elaborate the computation of each module in our deep optimization framework, forming an efficient and converged reconstruction process.
	
	\textbf{Closed solution for module $\mathcal{F}$:}	This module intends to give an intermediate solution to the fidelity term in~\eqref{eq:OriginalModel}, $\frac{1}{2}\|\mathbf{PFA}\bm{\alpha}\!-\!\mathbf{y}\|_2^2$, without the complex sparsity constraint. Instead, we constrain the solution close to the reconstruction at the previous iteration $\bm{\alpha}^{k}$ using the $l_2$ norm:
	\begin{equation}
	\mathbf{u}^{k+1}  =
	\arg\min\limits_{\mathbf{u}}
	\frac{1}{2}\|\mathbf{PFAu}\!-\!\mathbf{y}\|_2^2\!+
	\!\frac{\rho}{2}\|\mathbf{u}\!-\!\bm{\alpha}^k\|_2^2,
	\label{eq:ModelWithProxTerm}
	\end{equation}
	where $\rho$ balances between the fidelity term and $\bm{\alpha}^{k}$. This constraint guarantees the recovery from the fidelity term not deviating from the overall optimization direction. Moreover, the continuity of~\eqref{eq:ModelWithProxTerm} renders a closed solution:
	\begin{equation}
	\begin{aligned}
	\mathbf{u}^{k+1\!}\! = \!\mathcal{F}\left(\bm{\alpha}^{k};\!\rho\right)\!=\!
	\mathbf{A}^\mathrm{T}\mathbf{F}^\mathrm{T}\!\left(\mathbf{P}^\mathrm{T}\mathbf{P}\!+\!\rho \mathbf{I}\right)^{\!-1\!}
	\left(\mathbf{P}^\mathrm{T}\mathbf{y}\!+\!\rho \mathbf{FA}\bm{\alpha}^k\right).
	\end{aligned}
	\label{eq:closed}
	\end{equation}
	
	\textbf{Residual learning for module $\mathcal{N}$:}
	Inspired by the discovery that artifacts from under-sampled data have a topologically simpler structure than original images~\cite{lee2017deep}, we choose a deep architecture with shortcuts to learn structural information underlying pre-reconstructed images $\mathbf{u}^{k+1}$ with noise/artifacts. The deep architecture consists of seven blocks. The first block is composed of a dilated convolution layer cascaded with a rectified linear unit (ReLU) layer. Each of the five middle blocks consists of three layers,~\emph{i.e.}, a dilated convolution, a batch normalization and a ReLU. The last one is a single convolution layer.  Dilated convolutions with a larger receptive field are adopted in the network to differentiate anatomical structures from artifacts, resulting in better denoising performance.  
	
	Such a deep approach accommodates various image structures reflected by training examples without the need of explicit models. More importantly, a well trained deep network can provide rather fast forward inference needing no gradient calculation in the image domain at each iteration. Specifically, we formulate this forward process as:
	\begin{equation}
	\mathbf{v}^{k\!+\!1} = \mathcal{N}\left(\mathbf{u}^{k\!+\!1};\vartheta^{k\!+\!1}\right) = \mathbf{A}^\mathtt{T}(\emph{ResNet}(\mathbf{A}\mathbf{u}^{k\!+\!1};\vartheta^{k\!+\!1})),
	\end{equation}
	where $\emph{ResNet}$ denotes the inference via the trained deep network with shortcuts. The output $\mathbf{u}^{k+1}$ of the fidelity term is a set of coefficients upon the wavelet domain rather than an image, and thus we have to convert $\mathbf{u}^{k+1}$ into the image domain by  $\mathbf{A}\bm{u}^{k+1}$ as the input fed to the network. Subsequently, we convert the network output back into coefficients by applying the inverse transformation of $\mathbf{A}$.
	
	\textbf{Optimal condition for module $\mathcal{C}$:}
	To maintain the theoretical convergence of the proposed framework, we design a checking mechanism to guarantee iterations always converging towards the optimal solution of the original energy~\eqref{eq:OriginalModel}. First, we introduce a proximal gradient of a momentum term to connect the output of the data-driven module with the first-order optimal condition of the minimization energy. We define the momentum proximal gradient\footnote{$\mathtt{prox}_{\eta \lambda\|\cdot\|_{p}}(\mathbf{v}) = \arg\min_{\mathbf{x}} \lambda  \|\mathbf{x}\|_{p} +\frac{1}{2} \|\mathbf{x} - \mathbf{v}\|^2$.} as
	\begin{equation}
	\bm{\beta}^{k+1\!}\!\in\!\mathtt{prox}_{\!\eta_1\lambda\|\!\cdot\!\|_p}
	\!\left(\!\mathbf{v}^{k+1\!}\!-
	\!\eta_1\!\left(\!\nabla f\!\left(\!\mathbf{v}^{k+1\!}\right)\!+
	\!\rho\!\left(\!\mathbf{v}^{k+1\!}\!-\!\bm{\alpha}^{k\!}\right)\!\right)\!\right),
	\label{eq:checkingprox}
	\end{equation}
	where $\eta_1$ is the step-size and $f$ denotes the fidelity term in~\eqref{eq:OriginalModel}. Then, we establish a feedback mechanism by considering the first-order optimal condition of~\eqref{eq:checkingprox} as 
	\begin{equation}
	\|\mathbf{v}^{k+1}-\bm{\beta}^{k+1}\| \leq \varepsilon^{k}\|\bm{\alpha}^{k}-\bm{\beta}^{k+1}\|.
	\label{eq:error}
	\end{equation}
	Here, $\varepsilon^{k}$ is a positive constant revealing the tolerance scale of the distance between the current solution $\bm{\beta}^{k+1}$ and previous updating $\bm{\alpha}^{k} $ at the $k$-th stage. The previous output $\bm{\alpha}^{k}$ is also re-considered when~\eqref{eq:error} is not satisfied. Finally, we summarize the checking module as
	\begin{equation}
	\begin{array}{l}
	\mathbf{w}^{k+1}= \mathcal{C}( \mathbf{v}^{k+1} ,\bm{\alpha}^{k}; \eta_1) 
	= \left\{
	\begin{array}{ll}
	\bm{\beta}^{k+1}    & \ \text{if}~\eqref{eq:error}~\text{is satisfied}\\
	\bm{\alpha}^{k}     & \ \text{otherwise}.\\
	\end{array} \right.
	\end{array}
	\label{eq:w}
	\end{equation}
	In this manner, the iterative process bypasses the improper output of deep networks, and directs to the desired solution.
	
	\textbf{Proximal gradient for module $\mathcal{P}$:} The sparsity prior $\lambda\|\mathbf{v}\|_p$ in~\eqref{eq:OriginalModel} acts as the last module for each iteration. Considering the non-convexity of $\ell_p$-norm regularization, we solve it via a step of proximal gradient as follows:
	\begin{equation}
	\!\bm{\alpha}^{k+1}\! =\!\mathcal{P}\!\left( \mathbf{w}^{k+1};\!\eta_2\right)\! \in \! \mathtt{prox}_{\eta_2 \lambda\|\cdot\|_p}\!\left(\mathbf{w}^{k+1}\!-\!\eta_2\nabla f\!\left(\mathbf{w}^{k+1}\!\right)\!\right),
	\end{equation}
	where $\eta_2$ is the step size. This operation enforces the prior term of the reconstruction model, and thus preserves more details avoiding over-smoothing reconstruction.
	
	Algorithm~\ref{alg1} lists the iterative reconstruction process where each iteration consists of cascaded computations for the four principled modules. This process converges to the solution to~\eqref{eq:OriginalModel}, and we will give the theoretical analysis below.
	
	\begin{algorithm}[!t]
		\caption{Reconstruction procedure} 
		\label{alg1}
		\begin{algorithmic}[1]
			\REQUIRE $ \mathbf{x^0, P, F, y,}$ and necessary parameters.
			\ENSURE Reconstructed MR image $ \mathbf{x}$. 
			\WHILE{ not converged}
			\STATE $\mathbf{u}^{k+1}  =  \mathcal{F}\left(\bm{\alpha}^{k};\rho\right), $
			\STATE $\mathbf{v}^{k+1}  = \mathcal{N}\left(\mathbf{u}^{k+1};\vartheta^{k+1}\right), $
			\STATE $\mathbf{w}^{k+1} =\mathcal{C}\left( \mathbf{v}^{k+1} , \bm{\alpha}^{k}; \eta_1\right), $
			\STATE $\bm{\alpha}^{k+1} = \mathcal{P}\left( \mathbf{w}^{k+1};\eta_2\right),$

			\ENDWHILE
			\STATE $\mathbf{x} = \mathbf{A}\bm{\alpha}^{*}. $
		\end{algorithmic}
	\end{algorithm}

	%
	
	\subsection{Theoretical investigations}
	Existing deep optimization strategies highly rely on the distribution of data, and discard the convergence guarantee in iterations~\cite{sun2016deep,diamond2017unrolled}. Our scheme not only includes energy optimization with imaging principles and prior knowledge, but also embraces a learnable deep architecture for fast inference. Furthermore, we design an effective mechanism to determine whether the output of networks at current iteration follows a descent direction towards the energy optimum. This subsection theoretically analyzes the convergence behavior of our method. 
	
	To simplify the following derivations, we first rewrite the function in~\eqref{eq:OriginalModel} as
	$$
	\begin{array}{l}
	\Phi (\bm{\alpha}) =\frac{1}{2}\| \mathbf{PF\mathbf{A}\bm{\alpha}-y}\|_2^2+\lambda \|\bm{\alpha}\|_p.
	\end{array}
	$$
	Furthermore, we also give the following properties about $\Phi$ that are helpful for the convergence analysis\footnote{We place the details of these properties, and all the proofs of the following propositions and theorem in supplemental materials due to page limit.}:\\
	a) $\frac{1}{2}\| \mathbf{PF\mathbf{A}\bm{\alpha}-y}\|_2^2$ is proper and Lipschitz smooth;\\	
	b) $\lambda \|\bm{\alpha}\|_p$ is proper and lower semi-continuous;\\
	c) $\Phi (\bm{\alpha})$ satisfies the K{\L} property and is coercive.\\
	
	Then two important propositions and one theorem are given to characterize the convergence properties of our method.
	
	\begin{proposition}\label{prop:c-error}
		Let $ \left\{\bm{\alpha}^k\right\}_{k\in\mathbb{N}} $ and  $\left\{\bm{\beta}^k\right\}_{k\in\mathbb{N}} $ be the sequences generated by Alg.~\ref{alg1}. Supposing that the error condition 
		$\|\mathbf{v}^{k+1}-\bm{\alpha}^{k}\| \leq \varepsilon^{k}\|\bm{\beta}^{k+1}-\bm{\alpha}^{k}\|$ 
		in our module $\mathcal{C}$ is satisfied, there exists a sequence $\{C^{k}\}_{k\in\mathbb{N}}$ such that 
		\begin{equation}	
		\Phi(\bm{\beta}^{k+1}) \leq \Phi(\bm{\alpha}^k)-C^{k}\|\bm{\beta}^{k+1}-\bm{\alpha}^k\|^2,
		\end{equation}
		where $C^{k} = {1}/{2\eta_{1}} - {L_f}/{2} -  (L_f  + |\rho-{1}/{\eta_{1}}|)\epsilon^{k} >0$ and $L_f$ is the Lipschitz coefficient of $\nabla f$ .
		\label{eq:ineq_fun_pgmomentum}
	\end{proposition}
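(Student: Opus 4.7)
The plan is to derive the descent inequality by combining two standard ingredients: the minimization property that defines $\bm{\beta}^{k+1}$ as a proximal point, and the Lipschitz descent lemma applied to the smooth fidelity $f(\bm{\alpha})=\frac{1}{2}\|\mathbf{PFA}\bm{\alpha}-\mathbf{y}\|_2^2$ coming from property~(a). After expanding the squared term inside the prox in~\eqref{eq:checkingprox}, $\bm{\beta}^{k+1}$ is a global minimizer of
\[
h(\mathbf{x})=\lambda\|\mathbf{x}\|_p+\frac{1}{2\eta_1}\|\mathbf{x}-\mathbf{v}^{k+1}\|^2+\langle\nabla f(\mathbf{v}^{k+1})+\rho(\mathbf{v}^{k+1}-\bm{\alpha}^k),\,\mathbf{x}\rangle,
\]
so feeding $\bm{\alpha}^k$ in as a competitor yields $h(\bm{\beta}^{k+1})\le h(\bm{\alpha}^k)$, which I would rearrange into an upper bound on $\lambda\|\bm{\beta}^{k+1}\|_p-\lambda\|\bm{\alpha}^k\|_p$. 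This step is purely algebraic and does not require convexity of $\|\cdot\|_p$; it only uses that $\bm{\beta}^{k+1}$ is a minimizer of the scalar function $h$.

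Next I would apply the Lipschitz descent lemma at $\bm{\alpha}^k$,
\[
f(\bm{\beta}^{k+1})\le f(\bm{\alpha}^k)+\langle\nabla f(\bm{\alpha}^k),\,\bm{\beta}^{k+1}-\bm{\alpha}^k\rangle+\tfrac{L_f}{2}\|\bm{\beta}^{k+1}-\bm{\alpha}^k\|^2,
\]
add it to the prox inequality, and adopt the shorthand $d:=\bm{\beta}^{k+1}-\bm{\alpha}^k$, $\delta:=\mathbf{v}^{k+1}-\bm{\alpha}^k$, so that $\bm{\beta}^{k+1}-\mathbf{v}^{k+1}=d-\delta$. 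Expanding the squared differences about $\mathbf{v}^{k+1}$ and cancelling the $\|\delta\|^2$ terms, the sum reduces to
\[
\Phi(\bm{\beta}^{k+1})-\Phi(\bm{\alpha}^k)\le\Bigl(\tfrac{L_f}{2}-\tfrac{1}{2\eta_1}\Bigr)\|d\|^2+\Bigl(\tfrac{1}{\eta_1}-\rho\Bigr)\langle d,\delta\rangle+\langle\nabla f(\bm{\alpha}^k)-\nabla f(\mathbf{v}^{k+1}),\,d\rangle.
\]
Cauchy--Schwarz together with the $L_f$-Lipschitz continuity of $\nabla f$ then bounds the two cross terms by $|\rho-1/\eta_1|\,\|d\|\|\delta\|$ and $L_f\|d\|\|\delta\|$ respectively, after which the hypothesis $\|\delta\|\le\varepsilon^k\|d\|$ collapses everything onto $\|d\|^2$ with exactly the coefficient $-C^k$ claimed in the proposition.

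The main obstacle I foresee is the bookkeeping of the two cross terms: merging them before bounding would produce a much looser constant of the form $L_f+1/\eta_1+\rho$, whereas the sharper $L_f+|\rho-1/\eta_1|$ in the statement appears only when each cross term is estimated \emph{separately} so that the signed magnitude $|\rho-1/\eta_1|$ survives instead of being split via the triangle inequality. A secondary subtlety is that $\|\cdot\|_p$ with $p\in(0,1)$ is non-convex, so $\bm{\beta}^{k+1}$ is only guaranteed to lie in the set $\arg\min h$; fortunately the argument uses only the minimization inequality at one such point and needs no subdifferential calculus. Positivity of $C^k$ then holds whenever $\eta_1<1/L_f$ and the tolerance $\varepsilon^k$ is chosen small enough, both of which are parameters directly controllable inside Algorithm~\ref{alg1}.
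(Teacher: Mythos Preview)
Your proposal is correct and follows essentially the same approach as the paper: both combine the minimization inequality defining $\bm{\beta}^{k+1}$ with the Lipschitz descent lemma for $f$, then control the two resulting cross terms separately via Cauchy--Schwarz, the $L_f$-Lipschitz bound on $\nabla f$, and the error condition $\|\mathbf{v}^{k+1}-\bm{\alpha}^k\|\le\varepsilon^k\|\bm{\beta}^{k+1}-\bm{\alpha}^k\|$. The only cosmetic difference is that the paper first re-centers the quadratic at $\bm{\alpha}^k$ before comparing, whereas you keep it centered at $\mathbf{v}^{k+1}$ and track the shift through your $d,\delta$ notation; the algebra and the final constant $C^k$ coincide.
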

	
	\begin{proposition}\label{prop:pg}
		If $\eta_2 < 1/L_f$, let $ \left\{\bm{\alpha}^k\right\}_{k\in\mathbb{N}} $ and $\left\{\mathbf{w}^k\right\}_{k\in\mathbb{N}}$ be the sequences generated by a proximal operator, then we have
		\begin{equation}
		\Phi(\bm{\alpha}^{k+1}) \leq \Phi(\mathbf{w}^{k+1})-({1}/({2\eta_2}) - {L_f}/{2})\|\bm{\alpha}^{k+1}-\mathbf{w}^{k+1}\|^2.\label{eq:ineq_fun_pg}
		\end{equation}	
	\end{proposition}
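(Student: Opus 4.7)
The plan is to prove this by combining two ingredients in the standard pattern for composite proximal descent: the descent lemma applied to the Lipschitz-smooth fidelity $f(\bm{\alpha}) = \tfrac{1}{2}\|\mathbf{PFA}\bm{\alpha}-\mathbf{y}\|_2^2$, and the variational inequality inherited from the definition of the proximal operator applied to the nonsmooth term $g(\bm{\alpha}) = \lambda \|\bm{\alpha}\|_p$. Note that $\bm{\alpha}^{k+1}$ comes from the prior module, namely $\bm{\alpha}^{k+1} \in \mathtt{prox}_{\eta_2 \lambda \|\cdot\|_p}(\mathbf{w}^{k+1} - \eta_2 \nabla f(\mathbf{w}^{k+1}))$, so by definition of the proximal map it is a global minimizer of the proximal subproblem even though $\|\cdot\|_p$ is nonconvex.

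First I would invoke property (a), which tells us that $\nabla f$ is $L_f$-Lipschitz, to get the descent inequality
\begin{equation*}
f(\bm{\alpha}^{k+1}) \leq f(\mathbf{w}^{k+1}) + \langle \nabla f(\mathbf{w}^{k+1}),\bm{\alpha}^{k+1}-\mathbf{w}^{k+1}\rangle + \tfrac{L_f}{2}\|\bm{\alpha}^{k+1}-\mathbf{w}^{k+1}\|^2.
\end{equation*}
Next, I would use that $\bm{\alpha}^{k+1}$ is a minimizer of $g(\cdot) + \tfrac{1}{2\eta_2}\|\cdot - (\mathbf{w}^{k+1}-\eta_2\nabla f(\mathbf{w}^{k+1}))\|^2$, and compare its objective value against the candidate point $\mathbf{w}^{k+1}$. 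After expanding the quadratic and cancelling the $\eta_2 \|\nabla f(\mathbf{w}^{k+1})\|^2$ terms on both sides, this yields
\begin{equation*}
\lambda\|\bm{\alpha}^{k+1}\|_p + \langle \nabla f(\mathbf{w}^{k+1}), \bm{\alpha}^{k+1}-\mathbf{w}^{k+1}\rangle + \tfrac{1}{2\eta_2}\|\bm{\alpha}^{k+1}-\mathbf{w}^{k+1}\|^2 \leq \lambda\|\mathbf{w}^{k+1}\|_p.
\end{equation*}

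Finally I would add the two inequalities. The cross inner product cancels exactly, the left side assembles into $\Phi(\bm{\alpha}^{k+1}) = f(\bm{\alpha}^{k+1}) + \lambda\|\bm{\alpha}^{k+1}\|_p$, the right side into $\Phi(\mathbf{w}^{k+1})$, and the quadratic terms combine into $-(\tfrac{1}{2\eta_2} - \tfrac{L_f}{2})\|\bm{\alpha}^{k+1}-\mathbf{w}^{k+1}\|^2$, which is the claimed inequality. The positivity of the coefficient is ensured by the hypothesis $\eta_2 < 1/L_f$.

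The main subtlety — more than an obstacle — is that $\|\cdot\|_p$ is nonconvex and lower semi-continuous only (property (b)), so strictly speaking the proximal map is set-valued and nothing like a nonexpansiveness inequality is available. However the argument above avoids that issue entirely because I only need the global optimality of $\bm{\alpha}^{k+1}$ for the proximal subproblem evaluated against the single competitor $\mathbf{w}^{k+1}$, which follows directly from the $\arg\min$ definition regardless of convexity. One small bookkeeping item is to justify that the $\ell_p$ proximal problem actually attains its minimum (so that the $\arg\min$ is nonempty); this follows from lower semi-continuity of $g$ together with coercivity of the quadratic penalty.
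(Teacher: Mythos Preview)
Your proposal is correct and follows essentially the same approach as the paper: the paper likewise expands the proximal subproblem, compares the minimizer $\bm{\alpha}^{k+1}$ against the competitor $\mathbf{w}^{k+1}$ to bound the nonsmooth part, invokes the Lipschitz descent inequality for $f$, and adds the two. Your additional remarks on the nonconvexity of $\|\cdot\|_p$ and the well-posedness of the $\arg\min$ are more careful than the paper's treatment but do not change the argument.
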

	\begin{remark}
		The inequalities in Propositions \ref{prop:c-error} and \ref{prop:pg} provide a descent sequence $\Phi(\bm{\alpha}^{k})$ by illustrating the relationship of $\Phi(\bm{\alpha}^{k})$ and $\Phi(\mathbf{w}^k)$ as
		$$
		-\infty <\Phi(\bm{\alpha}^{k+1}) \leq \Phi(\mathbf{w}^k) \leq \Phi(\bm{\alpha}^k) \leq \Phi(\bm{\alpha}^0).
		$$
		Thus, the operator $\mathcal{C}$ is a key criterion to check the output of the learnable deep module whether to propagate along a descent direction toward the optimum. Moreover, it also ingeniously builds a bridge to connect the adjacent iteration $\Phi(\bm{\alpha}^{k})$ and $\Phi(\bm{\alpha}^{k+1})$.
	\end{remark}
	
	\begin{theorem}\label{theorem:convergence}
		Suppose $ \left\{\bm{\alpha}^k\right\}_{k\in\mathbb{N}} $ be a sequence generated by our algorithm. The following properties hold.
		\begin{itemize}
			\item The square summable of sequence $\left\{\bm{\alpha}^{k+1}-\mathbf{w}^{k+1} \right\}_{k\in\mathbb{N}}$ is bounded, i.e., 
			$
			\sum_{k=1}^{\infty}\|\bm{\alpha}^{k+1}-\mathbf{w}^{k+1}\|^2 < \infty.
			$
			\item The sequence $ \left\{\bm{\alpha}^k\right\}_{k\in\mathbb{N}} $ converges to a critical point $\bm{\alpha}^{*}$ of $\Phi$.
		\end{itemize}
	\end{theorem}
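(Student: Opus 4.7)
The plan is to deduce both claims from the two descent inequalities in Propositions~\ref{prop:c-error} and \ref{prop:pg}, combined with the K{\L} machinery of Attouch--Bolte--Svaiter. First I would chain the propositions to obtain a single per-iteration descent: since $\mathbf{w}^{k+1}$ equals either $\bm{\beta}^{k+1}$ (when the check \eqref{eq:error} fires) or $\bm{\alpha}^{k}$ (otherwise), Proposition~\ref{prop:c-error} yields $\Phi(\mathbf{w}^{k+1}) \le \Phi(\bm{\alpha}^{k}) - C^{k}\|\mathbf{w}^{k+1} - \bm{\alpha}^{k}\|^{2}$ in the first case and trivially $\Phi(\mathbf{w}^{k+1}) = \Phi(\bm{\alpha}^{k})$ with $\mathbf{w}^{k+1} - \bm{\alpha}^{k} = 0$ in the second. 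Chaining with Proposition~\ref{prop:pg} gives a uniform bound
\begin{equation*}
\Phi(\bm{\alpha}^{k+1}) \le \Phi(\bm{\alpha}^{k}) - C\bigl(\|\bm{\alpha}^{k+1} - \mathbf{w}^{k+1}\|^{2} + \|\mathbf{w}^{k+1} - \bm{\alpha}^{k}\|^{2}\bigr),
\end{equation*}
with $C = \min\{1/(2\eta_{2}) - L_{f}/2, \inf_{k} C^{k}\}>0$ under the stated step-size rules. Since $\Phi$ is coercive and proper l.s.c., the sequence $\{\Phi(\bm{\alpha}^{k})\}$ is bounded below, hence monotone convergent, and $\{\bm{\alpha}^{k}\}$ is bounded. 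Telescoping the display from $k=1$ to $\infty$ immediately gives $\sum_{k}\|\bm{\alpha}^{k+1} - \mathbf{w}^{k+1}\|^{2} < \infty$, proving the first bullet, and simultaneously $\sum_{k}\|\mathbf{w}^{k+1} - \bm{\alpha}^{k}\|^{2} < \infty$, so that $\|\bm{\alpha}^{k+1} - \bm{\alpha}^{k}\| \to 0$.

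For the second bullet I would follow the standard three-ingredient recipe for nonconvex K{\L} convergence. The \emph{sufficient decrease} ingredient is exactly the chained inequality above. The \emph{relative error} ingredient requires exhibiting a subgradient $g^{k+1} \in \partial \Phi(\bm{\alpha}^{k+1})$ with $\|g^{k+1}\| \le \kappa(\|\bm{\alpha}^{k+1} - \mathbf{w}^{k+1}\| + \|\mathbf{w}^{k+1} - \bm{\alpha}^{k}\|)$. This I would obtain from the first-order optimality of the proximal step defining $\mathcal{P}$: writing the optimality condition for $\mathrm{prox}_{\eta_{2}\lambda\|\cdot\|_{p}}$ yields $0 \in \partial(\lambda\|\cdot\|_{p})(\bm{\alpha}^{k+1}) + (\bm{\alpha}^{k+1} - \mathbf{w}^{k+1})/\eta_{2} + \nabla f(\mathbf{w}^{k+1})$, so adding and subtracting $\nabla f(\bm{\alpha}^{k+1})$ and using the Lipschitz property produces a subgradient of $\Phi$ at $\bm{\alpha}^{k+1}$ of norm $\le (1/\eta_{2} + L_{f})\|\bm{\alpha}^{k+1} - \mathbf{w}^{k+1}\|$, giving the required bound. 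The \emph{continuity} ingredient follows from l.s.c.\ of $\Phi$ together with the already-established $\|\bm{\alpha}^{k+1}-\bm{\alpha}^{k}\|\to 0$: any cluster point $\bm{\alpha}^{*}$ satisfies $\Phi(\bm{\alpha}^{k}) \to \Phi(\bm{\alpha}^{*})$ along subsequences, hence globally by monotonicity.

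With these three ingredients in hand, I would invoke the K{\L} inequality at a cluster point $\bm{\alpha}^{*}$ and run the desingularizing-function telescoping argument of Attouch--Bolte--Svaiter: the concavity of the K{\L} function $\varphi$ combined with the sufficient decrease and relative error inequalities yields $\sum_{k}\|\bm{\alpha}^{k+1} - \bm{\alpha}^{k}\| < \infty$, i.e.\ the iterates have finite length and therefore converge to a single limit $\bm{\alpha}^{*}$. Passing to the limit in the subgradient bound $g^{k+1} \to 0$ and invoking closedness of $\partial \Phi$ gives $0 \in \partial \Phi(\bm{\alpha}^{*})$, so $\bm{\alpha}^{*}$ is a critical point of $\Phi$.

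The main obstacle I anticipate is the relative-error step in the presence of the checking module $\mathcal{C}$. When the check fails and $\mathbf{w}^{k+1} = \bm{\alpha}^{k}$, the proximal identity for $\mathcal{P}$ still produces the subgradient bound cleanly; but when the check succeeds and $\mathbf{w}^{k+1} = \bm{\beta}^{k+1}$, I must carefully track how the momentum-proximal definition of $\bm{\beta}^{k+1}$ in \eqref{eq:checkingprox} interacts with the acceptance criterion \eqref{eq:error} so that the composite error $\|g^{k+1}\|$ is still controlled by $\|\bm{\alpha}^{k+1} - \bm{\alpha}^{k}\|$ (rather than by an uncontrolled $\|\mathbf{v}^{k+1} - \bm{\alpha}^{k}\|$). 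Resolving this requires the inequality \eqref{eq:error} together with the choice $\varepsilon^{k}$ sufficiently small to keep $C^{k}$ uniformly positive, which is precisely what Proposition~\ref{prop:c-error} was set up to deliver; I would therefore check that $\varepsilon^{k}$ can be chosen uniformly away from the threshold that makes $C^{k}$ vanish.
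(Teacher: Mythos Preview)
Your proposal is essentially the same approach as the paper's own proof: chain Propositions~\ref{prop:c-error} and~\ref{prop:pg} to get monotone descent, telescope for the first bullet, then extract a subgradient bound from the optimality condition of the $\mathcal{P}$-proximal step and run the K{\L} telescoping to obtain finite length. Two small remarks are worth making.

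First, the ``obstacle'' you anticipate is not one. The relative-error bound you derive, $\|g^{k+1}\| \le (1/\eta_{2} + L_{f})\|\bm{\alpha}^{k+1} - \mathbf{w}^{k+1}\|$, comes purely from the optimality condition of the $\mathcal{P}$ step and is completely agnostic to how $\mathbf{w}^{k+1}$ was produced; whether $\mathbf{w}^{k+1} = \bm{\beta}^{k+1}$ or $\mathbf{w}^{k+1} = \bm{\alpha}^{k}$, the same bound holds. The paper uses exactly this bound and never needs to revisit the internals of $\mathcal{C}$ at that stage.

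Second, your continuity ingredient is stated too loosely. Lower semi-continuity of $\Phi$ gives only $\Phi(\bm{\alpha}^{*}) \le \liminf_{j}\Phi(\bm{\alpha}^{k_{j}})$; it does not by itself give $\Phi(\bm{\alpha}^{k_{j}}) \to \Phi(\bm{\alpha}^{*})$. The paper closes this by plugging $\bm{\alpha} = \bm{\alpha}^{*}$ into the prox inequality defining $\bm{\alpha}^{k+1}$, which yields $\limsup_{j}\lambda\|\bm{\alpha}^{k_{j}}\|_{p} \le \lambda\|\bm{\alpha}^{*}\|_{p}$ and hence equality; continuity of $f$ then finishes the job. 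You will need this extra step.
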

	
	\begin{remark}
		The second property in Theorem~\ref{theorem:convergence} implies that $\left\{\bm{\alpha}^k\right\}_{k\in\mathbb{N}}$ is a Cauchy sequence so that it globally converges to the critical point of $\Phi$. 
	\end{remark}

	\section{Extensions to practical scenarios}
	Our framework can adapt to practical scenarios having similar models with~\eqref{eq:OriginalModel}. This section gives two examples,~\emph{i.e.}, parallel imaging and reconstruction with Rician noise.
	\subsection{Extension 1: Parallel imaging based CS-MRI}
	Reconstruction of sparse multi-coil data can be formed as:
	\begin{equation}
	\bm{\alpha}\in
	\arg\min\limits_{\bm{\alpha}}\left\{\sum_{l=1}^{L}
	\ \frac{1}{2}\|\mathbf{PFS}_l\mathbf{A}\bm{\alpha}\!-\!\mathbf{y}_l\|_2^2\!+
	\!\lambda\|\bm{\alpha}\|_p\right\},
	\label{eq:ParallelImagingModel}
	\end{equation}
	where $L$ is the number of receivers used for multi-coil data acquisition. $\mathbf{y}_l \in \mathbb{C}^M$ denotes the under-sampled MR data recorded by the $l^{th}$ receiver with the corresponding sensitivity map $\mathbf{S}_l \in \mathbb{C}^{N \times N}$ which can be estimated via commonly used self-calibration techniques~\cite{uecker2014espirit,el2018self}. 
	
	This CS-PI model \eqref{eq:ParallelImagingModel} shares a similar form with~\eqref{eq:OriginalModel} so that we are able to readily apply the proposed framework for efficient and reliable reconstruction.
	The fidelity term in~\eqref{eq:ParallelImagingModel} derives the closed-form solution for the module $\mathcal{F}$ as:
	\begin{equation}
	\begin{aligned}
	\mathbf{u}^{k+1\!}\! = \!\mathcal{F}\left(\bm{\alpha}^{k};\!\rho\right)\!=\!
	\sum_{l=1}^{L}\frac{\mathbf{A}^\mathrm{T}{\mathbf{S}_l}^\mathrm{T}
		\mathbf{F}^\mathrm{T}\left(\mathbf{P}^\mathrm{T}\mathbf{y}_l\!+\rho\mathbf{F}\mathbf{S}_l\mathbf{A}\bm{\alpha}^k\right)}
	{\mathbf{P}^\mathrm{T}\mathbf{P}\!+\!\rho \mathbf{I}}.
	\end{aligned}
	\label{eq:PI-closed}
	\end{equation}
	
	Accordingly, the derivative of the fidelity term, $\nabla f(\bm \alpha) $, in the modules $\mathcal{C}$ and $\mathcal{P}$ changes into:
	\begin{equation}
	\begin{aligned}
	\nabla f(\bm \alpha) = \sum_{l=1}^{L}\mathbf{A}^\mathrm{T}{\mathbf{S}_l}^\mathrm{T}\mathbf{F}^\mathrm{T}
		\mathbf{P}^\mathrm{T}\left(\mathbf{P}\mathbf{F}\mathbf{S}_l\mathbf{A}\bm\alpha-\mathbf{y}_l\right).
	\end{aligned}
	\label{eq:PI-closed}
	\end{equation}
	
	We can find final reconstruction by running Alg.~\ref{alg1} with other computations unchanged.
	
	\subsection{Extension 2: CS-MRI with Rician noise}
	In order to enable the proposed framework to tackle the reconstruction with non-addition Rician noise, we re-formulate the CS-MRI model in~\eqref{eq:OriginalModel} as:
	\begin{equation}
	\begin{array}{l}
	\min\limits_{\mathbf{x}}\ 
	\frac{1}{2}\| \mathbf{PF}\mathbf{z}-
	\mathbf{y}\|_2^2+\lambda_1\|\mathbf{A}_1^{\top}\mathbf{x}\|_p +
	\lambda_2\|\mathbf{A}_2^{\top}\mathbf{z}\|_p  \\
	s.t.\ \ \ \ \mathbf{z}=\sqrt{(\mathbf{x}+\mathbf{n_1})^2+\mathbf{n}_2^2},
	\end{array}
	\label{eq:rician1}
	\end{equation}
	where $\mathbf{x}$, $\mathbf{y}$ and $\mathbf{n_1,n_2}\sim\mathcal{N}(0,\sigma^2)$ denote the fully sampled clear MR image, observed $k$-space data and independent Gaussian noise in real and imaginary components, respectively. The matrices $\mathbf{A}_1^{\top}$ and $\mathbf{A}_2^{\top}$ are the inverse of wavelet transforms $\mathbf{A}_1$ and $\mathbf{A}_2$, respectively.
	
	We then split ~\eqref{eq:rician1} into two subproblems:
	\begin{subequations}
		\begin{numcases}{}
		\begin{array}{l}
		\!\!\!\!\mathbf{z}^{k+1} \! =\!\arg\min\limits_{\mathbf{z}}
		\frac{1}{2}\| \mathbf{PFz}-\mathbf{y}\|_2^2 +
		\lambda_1\|\mathbf{A}_1^{\top}\mathbf{z}\|_p\\
		\qquad \quad +\frac{\rho_1}{2}\|\mathbf{z}-  \mathbf{z}^{k} \|_2^2,
		\end{array}\label{eq:rician_sub_z}\\
		\begin{array}{l}
		\!\!\!\!\mathbf{x}^{k+1}\!  =\!\arg\min\limits_{\mathbf{x}}
		\frac{1}{2}\|\sqrt{(\mathbf{x}\!\!+\!\!\mathbf{n}_1)^2 +
			\mathbf{n}_2^2}-\mathbf{z}^{k+1}\|_2^2\\
		\qquad \quad+\lambda_2\|\mathbf{A}_2^{\top}\mathbf{x}\|_p,
		\end{array}\label{eq:rician_sub_x}
		\end{numcases}
	\end{subequations}
	where $\rho_1$ is a penalty parameter. Therefore, we can alternatively apply the iterative process in Alg.~\ref{alg1} to solve $\mathbf{z}^{k+1}$ and $\mathbf{x}^{k+1}$, finally converging to the optimal estimate $\mathbf{x}^{*}$. 
	
	The subproblem~\eqref{eq:rician_sub_z} aims to reconstruct a fully sampled \emph{noisy} MR image from the $k$-space observation $\mathbf{y}$. Therefore, Alg.\ref{alg1} is applicable to the subproblem with all modules unchanged but taking the two square terms as the fidelity module:
	\begin{equation}
	\mathbf{u_z}^{k\!+\!1}\!  =\!
	\arg\min\limits_{\mathbf{u_z}}
	\!\frac{1}{2}\!\|\mathbf{PFu_z}\!-\!\mathbf{y}\|_2^2\!+
	\!\frac{\rho_1}{2}\!\|\mathbf{u_z}\!-\!\mathbf{z}^k\|_2^2 \!+
	\!\frac{\rho_2}{2}\!\|\mathbf{u_z}\!-\!\mathbf{x}^k\|_2^2,
	\label{eq:ModelWithProxTermRicien}
	\end{equation}
	where $\mathbf{z}^k$ and $\mathbf{x}^k$ represent the output at previous iteration of the subproblems~\eqref{eq:rician_sub_z} and~\eqref{eq:rician_sub_x}, respectively. $\rho_1$ and $\rho_2$ balances among the fidelity term, $\mathbf{z}^k$ and $\mathbf{x}^k$. Also, the continuity of~\eqref{eq:ModelWithProxTermRicien} renders a closed solution:
	\begin{equation}
	\begin{aligned}
	\!\mathbf{u_z}^{\!k\!+\!1}\! &= \!\mathcal{F}\left(\mathbf{z}^{k}, \mathbf{x}^{k};\!\rho\right)\!\\
	&=\!
	\mathbf{F}^\mathrm{\!T}\!\left(\!\mathbf{P}^\mathrm{\!T}\mathbf{P}\!+\!\rho_1\mathbf{\!I}+\!\rho_2\mathbf{\!I}\right)^{\!-1\!}
	\!\left(\!\mathbf{P}^\mathrm{\!T}\mathbf{y}\!+\!\rho_1 \mathbf{F}\mathbf{z}^{k}\!+\!\rho_2 \mathbf{F} \mathbf{x}^{k}\!\right).
	\end{aligned}
	\label{eq:closedRicien}
	\end{equation}
	Accordingly, we derive the derivative of the fidelity term $\nabla f(\mathbf{z}) $ in both modules $\mathcal{C}$ and $\mathcal{P}$ as:
	\begin{equation}
	\begin{aligned}
	\nabla f(\mathbf{z}) = \mathbf{F}^\mathrm{T}
	\left(\mathbf{P}^\mathrm{T}\mathbf{P}\mathbf{F}\mathbf{z}-\mathbf{P}^\mathrm{T}\mathbf{y}\right)+
	\rho_1\left(\mathbf{z}-\mathbf{z}^{k}\right) .
	\end{aligned}
	\label{eq:derivationRician}
	\end{equation}
	
	The second subproblem~\eqref{eq:rician_sub_x} targets at noise removal by writing the fidelity module as:
	\begin{equation}
	\mathbf{u_x}^{k\!+\!1}\!  =\!
	\arg\min\limits_{\mathbf{u_x}}
	\!\frac{1}{2}\|\!\sqrt{\!\left(\mathbf{u_x}\!\!+\!\!\mathbf{n}_1\right)^2 \!+\!
		\mathbf{n}_2^2}\!-\!\mathbf{z}^{k\!+\!1}\|_2^2\\ \!.
	\label{eq:ModelWithProxTermRicienx}
	\end{equation}
	Unfortunately, we can hardly express a closed-form solution $\mathbf{u_x}$ to the module $\mathcal{F}$, and thus resort to an efficient learnable strategy with two-stage IRCNNs, trained using pairs of noise-free and Gaussian noisy MR images, in order to tackle non-additive Rician noise. The first stage predicts $(\mathbf{u_x} + \mathbf{n}_{1})^{2}$ from $(\mathbf{z}^{k+1})^{2}$, which removes additive noise $\mathbf{n}_{2}^{2}$ by noticing $(\mathbf{z}^{k+1})^{2}=(\mathbf{u_x} + \mathbf{n}_{1})^{2} + \mathbf{n}_{2}^{2}$. The second stage feeds the square root of the output of the first stage,~\emph{i.e.}, $\sqrt{(\mathbf{u_x} + \mathbf{n}_{1})^{2} }$, to the trained IRCNN, inferring $\mathbf{u_x}^{k+1}$ that eliminates additive noise $\mathbf{n}_{1}$. Given $\mathbf{u_x}^{k+1}$, we obtain one step iteration $\mathbf{x}^{k+1}$ by subsequently calculating the other three modules, $\mathcal{N}$, $\mathcal{C}$ and $\mathcal{P}$ in Alg.~\ref{alg1}. The gradient $\nabla f(\mathbf{x})$ in $\mathcal{C}$ and $\mathcal{P}$ changes into:
	\begin{equation}
	\begin{aligned}
	\nabla f(\mathbf{x}) \!=\! (\mathbf{x}\! + \!\mathbf{n_1})(1\! -\! (\sqrt{(\mathbf{x} \!+\!\mathbf{n_1})^2\!+\!\mathbf{n_2}^2})^{\!-1\!}\mathbf{z}^{k+1})),
	\end{aligned}
	\label{eq:derivationRicianx}
	\end{equation}
	where $\mathbf{n_1}$ and $\mathbf{n_2}$ are handily available as byproducts of the two-stage IRCNN inference.

	\begin{figure}[!tbp]
		\begin{center}
			\begin{tabular}{c@{\extracolsep{-1em}}c@{\extracolsep{0.2em}}c}
				&\includegraphics[width=.24\textwidth]{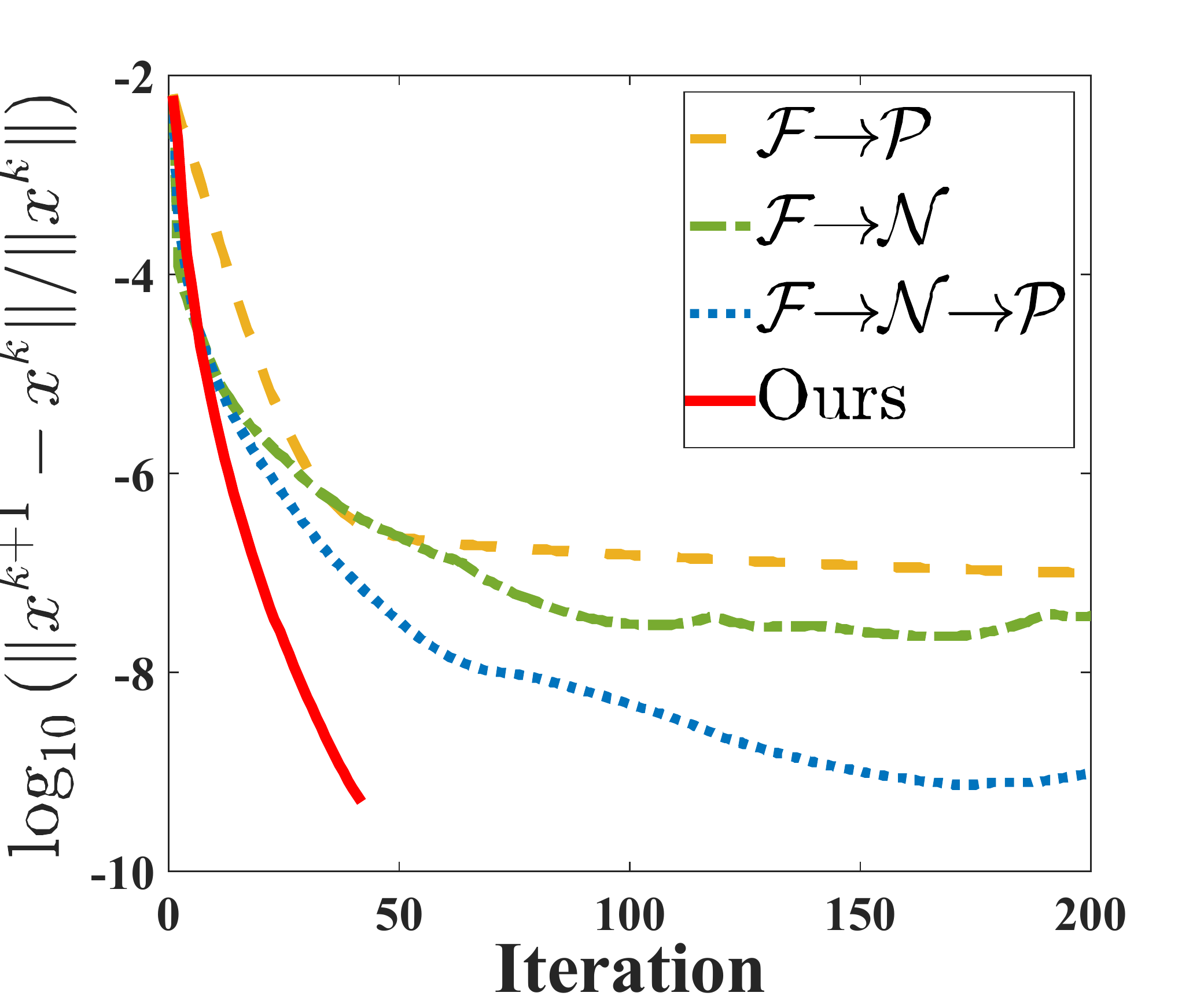}
				&\includegraphics[width=.237\textwidth]{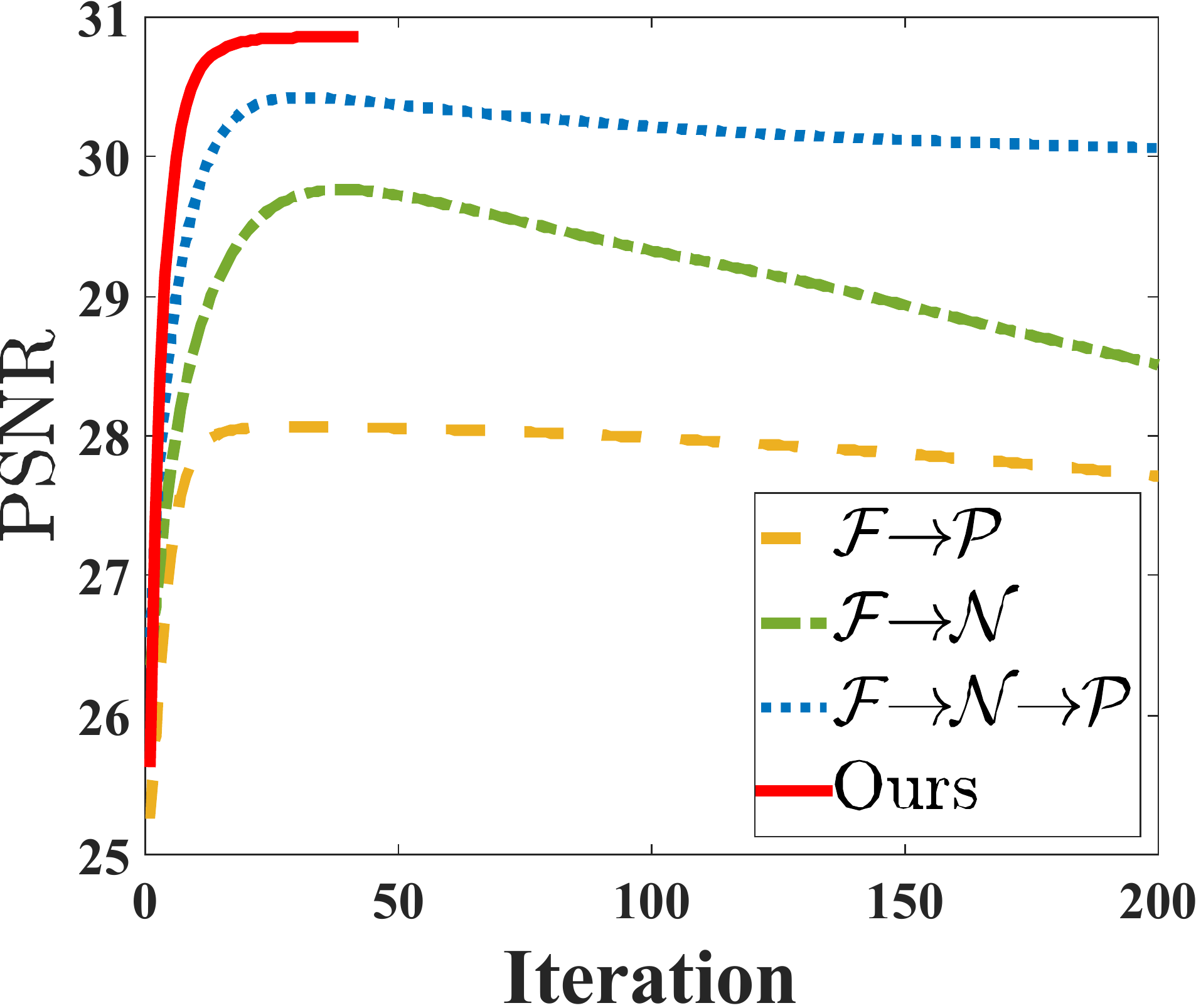}\\
			\end{tabular}
		\end{center}
		\caption{Loss (left) and PSNR (right) evolution during iterations of four module combinations.}
		\label{iteration}
	\end{figure}
	\begin{figure}[!tbp]
		\begin{center}
			\begin{tabular}{c@{\extracolsep{-0.5em}}c@{\extracolsep{0.3em}}c@{\extracolsep{0.3em}}c@{\extracolsep{0.3em}}c@{\extracolsep{0.0em}}c}
				&\includegraphics[width=.11\textwidth]{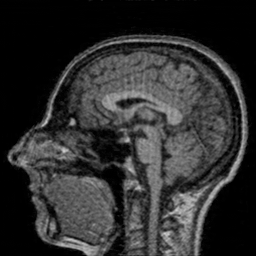}
				&\includegraphics[width=.11\textwidth]{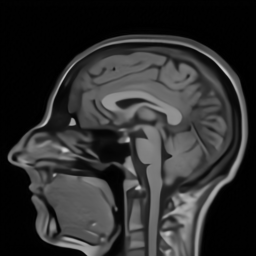}
				&\includegraphics[width=.11\textwidth]{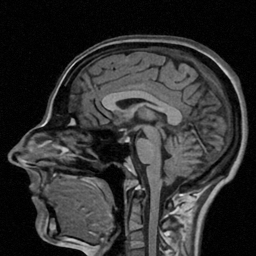}
				&\includegraphics[width=.11\textwidth]{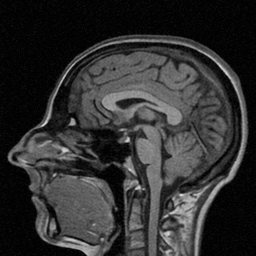}\\
				& $\mathcal{F}\!\!\to\!\!\mathcal{P}$&$\mathcal{F}\!\!\to \!\!\mathcal{N}$& $\mathcal{F}\!\!\to\!\!\mathcal{N}\!\!\to\!\!\mathcal{P} $ & Ours\\
			\end{tabular}
		\end{center}
		\caption{Images reconstructed with four module combinations.}
		\label{component}
	\end{figure}

	\begin{table*}[t]
		\centering
		\caption{Quantitative comparison on different sampling patterns at 20\% sampling ratio.}
		\vspace{-0.5em}
		\begin{tabular}{|p{2.4cm}|p{0.7cm}p{0.9cm}|p{0.7cm}p{0.9cm}|p{0.7cm}p{0.9cm}|p{0.7cm}p{0.9cm}|p{0.7cm}p{0.9cm}|p{0.7cm}p{0.9cm}|} 
			\hline
			Data &	 \multicolumn{6}{c|}{T$_1$-weighted MR Data} & 	\multicolumn{6}{c|}{T$_2$-weighted MR Data}	\\ \hline
			Mask 	 	& \multicolumn{2}{c|}{Cartesian} 	& \multicolumn{2}{c|}{Radial}	& \multicolumn{2}{c|}{Gaussian}	& \multicolumn{2}{c|}{Cartesian}	& \multicolumn{2}{c|}{Radial}	& \multicolumn{2}{c|}{Gaussian} 	\\ \hline	
			Evaluation	& PSNR &	RLNE  & PSNR  &RLNE   & PSNR  &RLNE 	& PSNR &RLNE 	& PSNR &RLNE 	& PSNR &RLNE 	\\ \hline
			
			ZeroFilling~\cite{bernstein2001effect} 	&22.16 &	0.2874 &	25.19 &	0.2026 &	25.08 &	0.2051& 	23.80 &	0.3724 & 	25.69 &	0.2636 &	26.48 &	0.2737 \\
			TV~\cite{lustig2007sparse}				&23.85 &	0.2366 &	27.76 &	0.1514 &	29.76 &	0.1204& 	26.76 &	0.2650 &	33.27 &	0.1258 &	35.85 &	0.0938 \\
			SIDWT~\cite{baraniuk2007compressive}	&23.43 &	0.2479 &	28.82 &	0.1449 &	29.64 &	0.1220& 	25.96 &	0.2905 &	33.51 &	0.1173 &	36.14 &	0.0905 \\
			PBDW~\cite{qu2012undersled}			&25.96 &	0.1857 &	30.86 &	0.1145 &	32.38 &	0.0890& 	29.18 &	0.2008 &	35.33 & 0.0951 &	38.54 &	0.0685 \\
			PANO~\cite{qu2014magnetic}				&27.88 &	0.1492 & 	30.51 &	0.1104 &	33.94 &	0.0746& 	31.74 &	0.1499 &	36.09 &	0.0912 &	40.43 &	0.0557 \\
			FDLCP~\cite{zhan2016fast}				&26.47 &	0.1757 &	30.75 &	0.1075 & 	26.47 &	0.1757& 	31.74 &	0.1500 &	37.33 &	0.0788 &	31.74 &	0.1500 \\
			BM3D-MRI~\cite{eksioglu2016decoupled}	&26.20 &	0.1802 &	31.08 &	0.1035 & 	35.41 &	0.0630& 	29.23 &	0.1999 &	37.55 &	0.0770 &	42.67 &	0.0428 \\
			DAMP~\cite{eksioglu2018denoising}		&25.90 &	0.1866 &	30.31 &	0.1128 & 	35.30 &	0.0638&		27.98 &	0.2306 &	35.98 &	0.0919 &	42.30 &	0.0447 \\
			ADMM-Net~\cite{sun2016deep}				&25.14 &	0.2041 &	29.42 &	0.1254 &	33.22 &	0.0807& 	28.62 &	0.2141 &	36.30 &	0.0888 &	39.19 &	0.0636 \\
			DAGAN~\cite{yang2017dagan}		&24.57 &	0.1707 &	26.79 &	0.1324 & 	27.79 &	0.1177&		25.89 &	0.1196 &	29.99 &	0.0748 &	30.40 &	0.0714 \\
			RefineGAN~\cite{quan2018compressed}	&25.27 &	0.2057 &	28.20 &	0.1462 &	26.72 &	0.1713& 	27.58 &	0.2506 &	33.54 & 0.1240 &	33.81 &	0.1226 \\
			Ours									&\textbf{28.22} &	\textbf{0.1433} &	\textbf{32.01} & \textbf{0.0928} &	\textbf{36.17} &	\textbf{0.0576}& 	\textbf{33.57} &	
										\textbf{0.1211} &	\textbf{38.47} &	\textbf{0.0691} & 	\textbf{42.77} &	\textbf{0.0422} \\
			\hline
		\end{tabular}
		\label{tab:patterns}
	\end{table*}

	\begin{figure}[!htbp] 
		\begin{center}
			\begin{tabular}{c@{\extracolsep{-1em}}c@{\extracolsep{0.8em}}c@{\extracolsep{2em}}c}
				&\includegraphics[width=.245\textwidth]{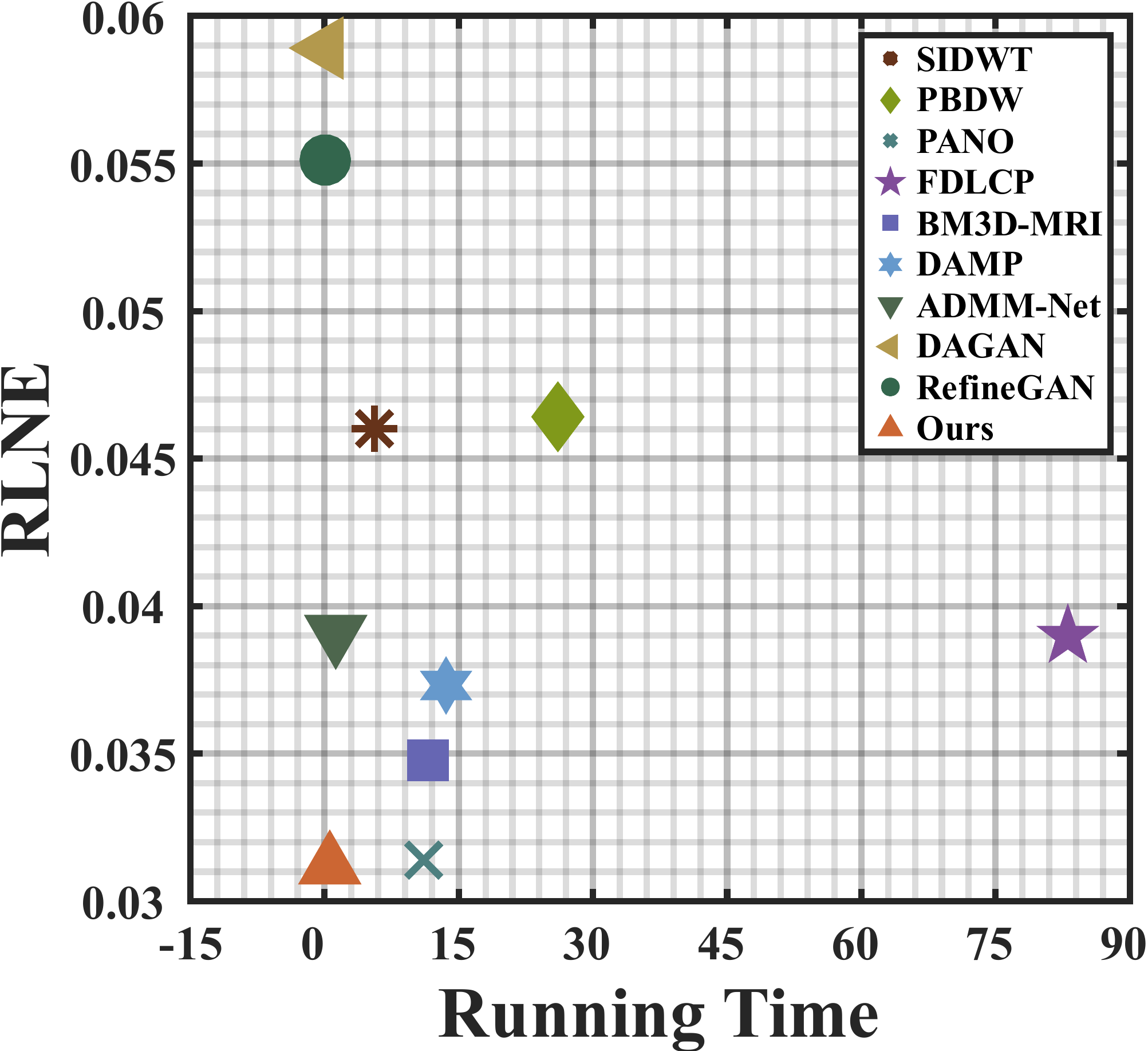}	
				&\includegraphics[width=.2325\textwidth]{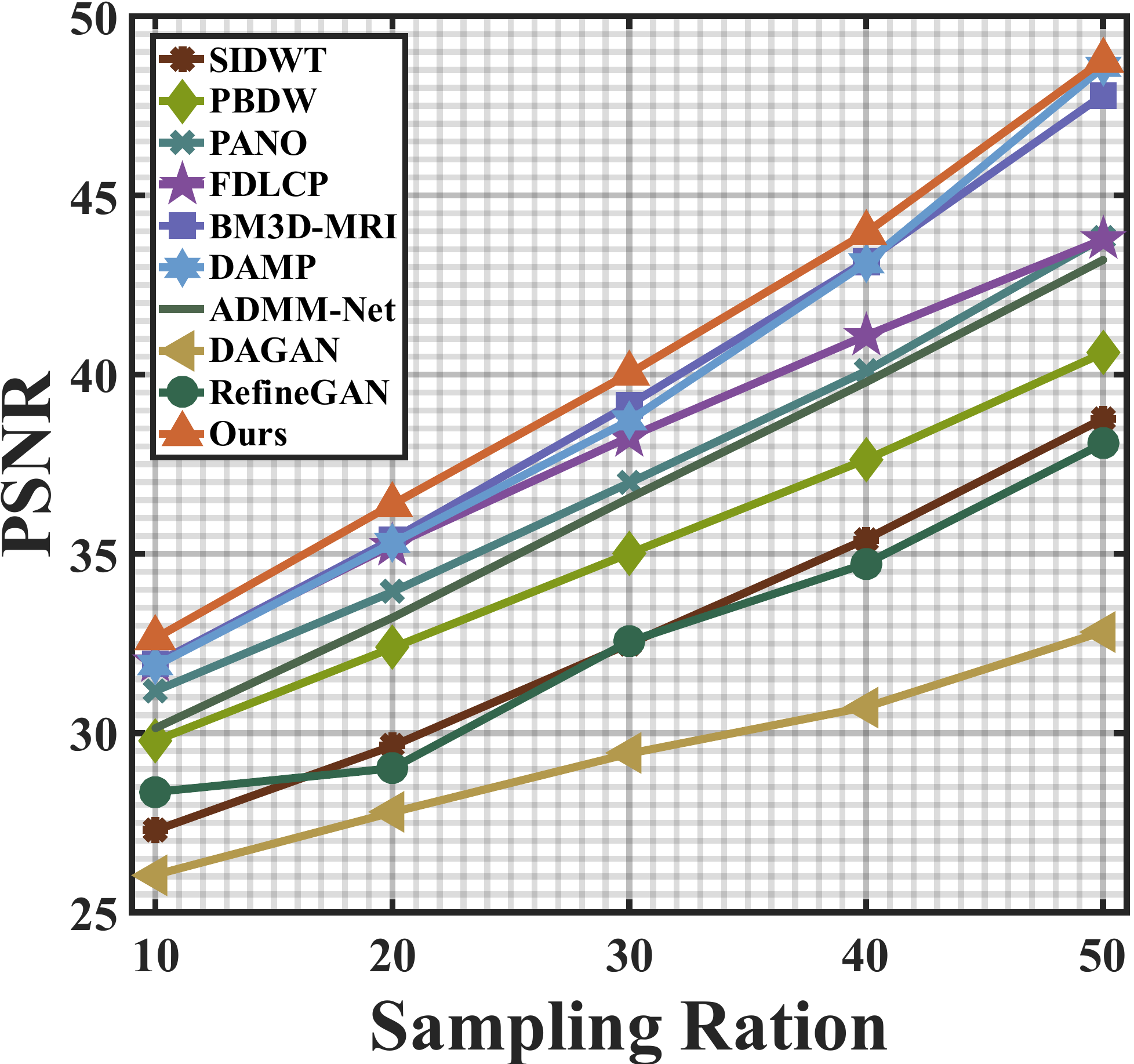}	\\	
			\end{tabular}
		\end{center}
		\caption{RLNE VS Running time (left) and PSNR VS sample ratios (right) reflect efficiency and robustness to sampling variations, respectively. We omit Zerofilling and TV whose results fall beyond the range of these two plots.}
		\label{fig:TimeRatio}
	\end{figure}
	
	\begin{table*}
		\centering
		\renewcommand\tabcolsep{10.5pt} 
		\caption{Comparisons of time consumption on T$_2$-weighted Data using Radial pattern at 50\% sampling ratio.}
		\begin{tabular}{|c|c|c|c|c|c|c|c|c|c|}
			\hline
			Evaluation 	&ZeroFilling& SIDWT 	& PBDW 		& PANO 		& FDLCP 	& BM3D-MRI 	& DMAP	& ADMM-Net 	& Ours \\
			\hline	
			PSNR		&22.85	&26.66&	26.55&	26.53	&25.65&	24.88&	26.23&	24.90&	\textbf{26.73}\\
			SSIM		&0.5353	&0.5863	&0.5553&	0.5804&	0.5293&	0.5674&	0.5282&	0.5431&	\textbf{0.5949}\\
			RLNE		&0.3062	&0.2339	&0.2359	&0.2289	&0.2653	&0.2729&	0.2529&	0.3085&\textbf{0.2186}\\						
			\hline	
		\end{tabular}
		\label{tab:RealData}
	\end{table*}

	\begin{figure*}[!htbp]
		\begin{center}
			\begin{tabular}{l@{\extracolsep{-0.5em}}c@{\extracolsep{0.2em}}
					c@{\extracolsep{0.2em}}c@{\extracolsep{0.2em}}c@{\extracolsep{0.2em}}c@{\extracolsep{0.2em}}c@{\extracolsep{0.2em}}c}
				&\includegraphics[width=.13595\textwidth]{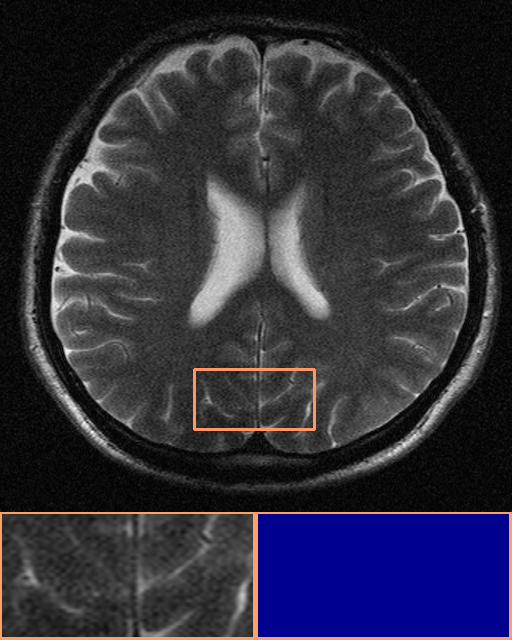}
				&\includegraphics[width=.13595\textwidth]{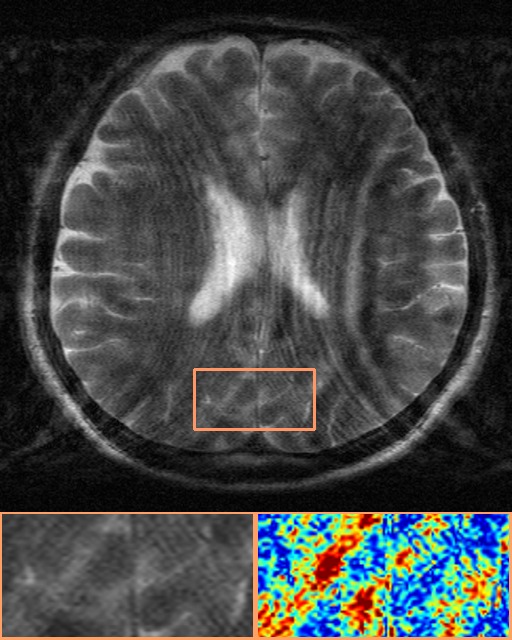}
				&\includegraphics[width=.13595\textwidth]{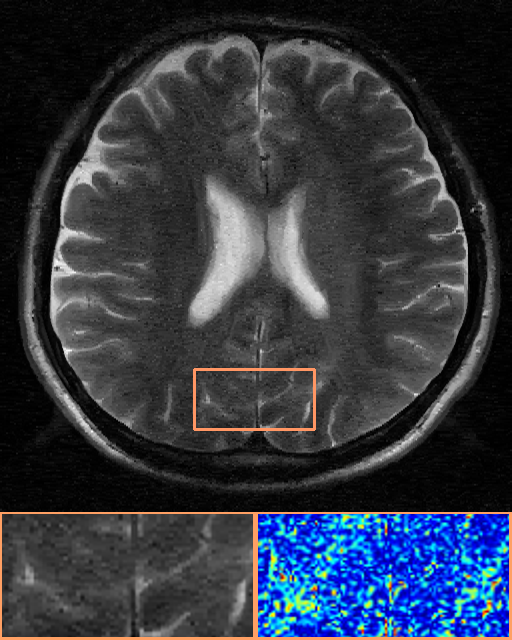}
				&\includegraphics[width=.13595\textwidth]{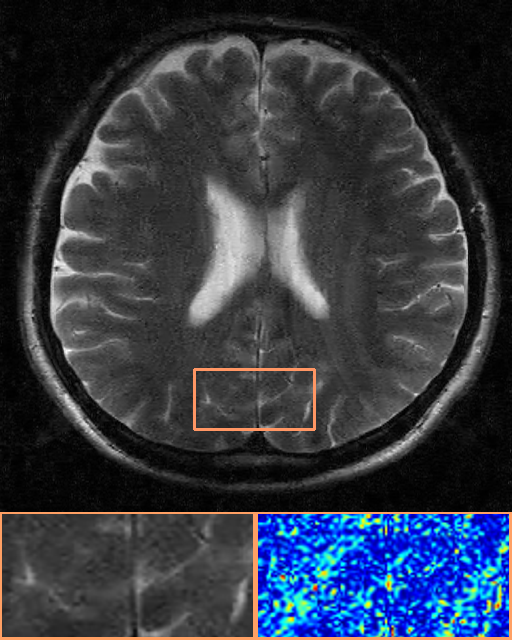}
				&\includegraphics[width=.13595\textwidth]{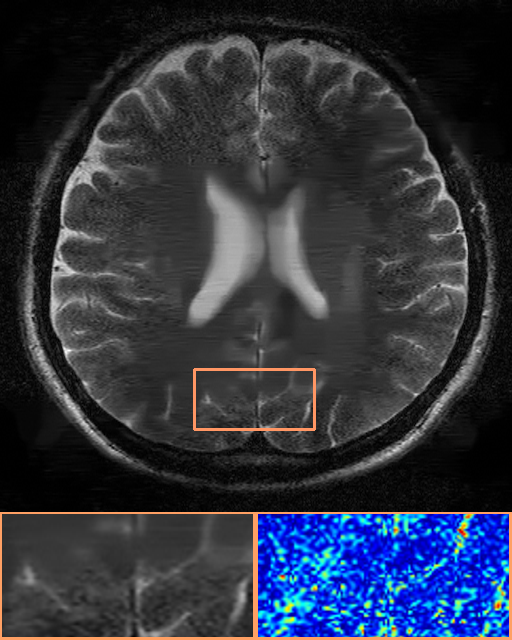}
				&\includegraphics[width=.13595\textwidth]{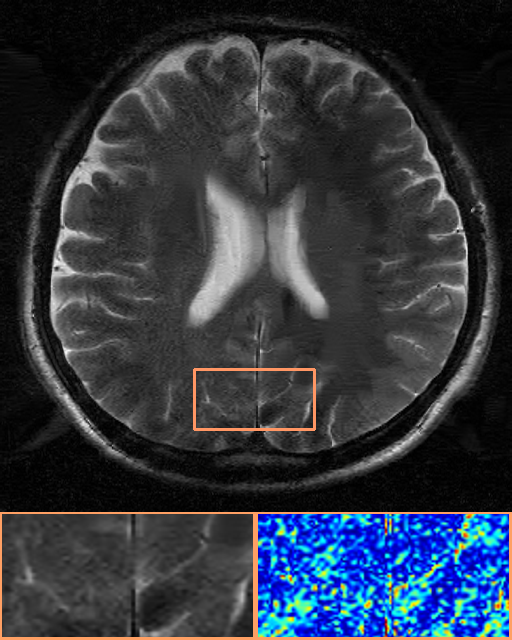}
				&\includegraphics[width=.13595\textwidth]{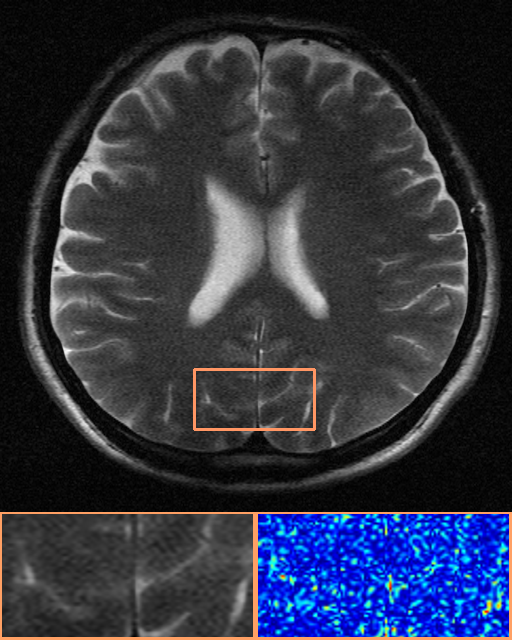}\\
				&Ground Truth & ZF (23.94) & PBDW (27.21) & PANO (27.66)& FDLCP (26.75)	 & DAMP (27.20)  & Ours (\textbf{28.33})\\			
			\end{tabular}
		\end{center}
		\caption{Images, zoomed-in details, and error heat maps (blue-lower and red-higher) reconstructed from raw $k$-space data at the acceleration factor of 2.0. Resultant PSNR values are parenthesized after corresponding algorithms.}
		\label{fig:RealData}
	\end{figure*}
	
	\begin{table}
		\centering
		\renewcommand\tabcolsep{6.3pt} 
		\caption{Comparisons on parallel imaging}
		\begin{tabular}{|c|c|c|c|c|c|} 
			\hline
			Evaluation&SENSE&ESPIRiT&ESPIRiT-L$_1$ &GRAPPA & Ours\\
			\hline		
			PSNR& 27.82 & 29.14  & 30.13 &  30.96  &  31.02  \\
			RLNE& 0.31 & 0.27  & 0.25 &  00.23  &  0.23  \\
			Time& 8.5570 & 9.8833  & 234.982 &  116.407  & 9.1078     \\
			\hline
		\end{tabular}
		\label{tab:parallelImaging}
	\end{table}
	
	\begin{figure*}[!htbp]
		\begin{center}
			\begin{minipage}{1\textwidth}
				\begin{tabular}{l@{\extracolsep{-0.5em}}c@{\extracolsep{0.2em}}c@{\extracolsep{0.2em}}
						c@{\extracolsep{0.2em}}c@{\extracolsep{0.2em}}c@{\extracolsep{0.2em}}c@{\extracolsep{0.2em}}c@{\extracolsep{0.2em}}c}
					&\includegraphics[width=.119\textwidth]{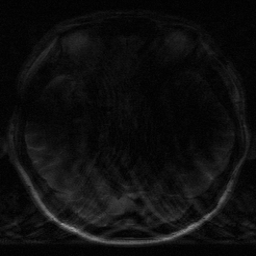}
					&\includegraphics[width=.119\textwidth]{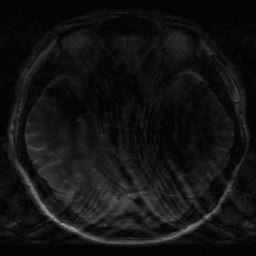}
					&\includegraphics[width=.119\textwidth]{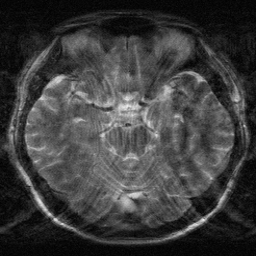}
					&\includegraphics[width=.119\textwidth]{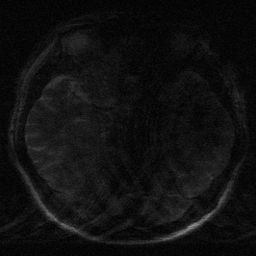}
					&\includegraphics[width=.119\textwidth]{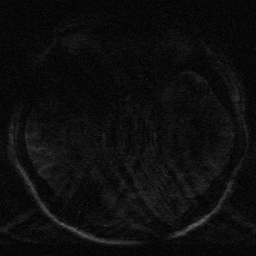}
					&\includegraphics[width=.119\textwidth]{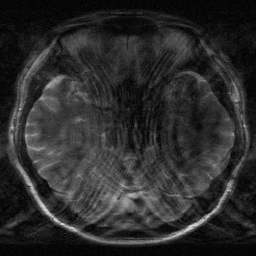}
					&\includegraphics[width=.119\textwidth]{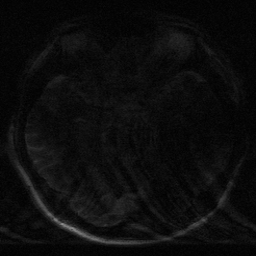}
					&\includegraphics[width=.119\textwidth]{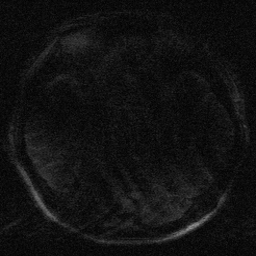}	\\				
				\end{tabular}
			\end{minipage}
			
			\begin{minipage}{1\textwidth}
				\begin{tabular}{l@{\extracolsep{-0.5em}}c@{\extracolsep{0.2em}}c@{\extracolsep{0.2em}}
						c@{\extracolsep{0.2em}}c@{\extracolsep{0.2em}}c@{\extracolsep{0.2em}}c}
					&\includegraphics[width=.16\textwidth]{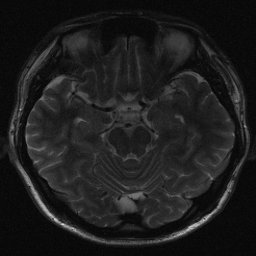}
					&\includegraphics[width=.16\textwidth]{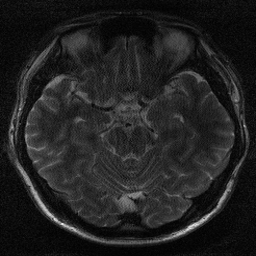}
					&\includegraphics[width=.16\textwidth]{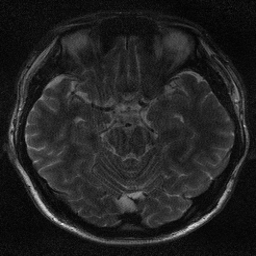}c
					&\includegraphics[width=.16\textwidth]{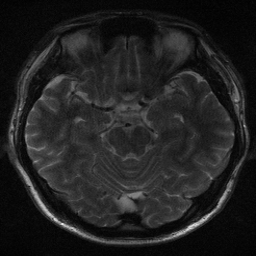}
					&\includegraphics[width=.16\textwidth]{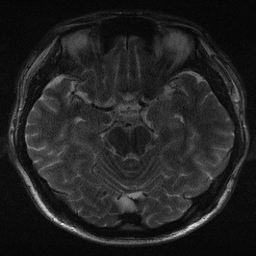}
					&\includegraphics[width=.16\textwidth]{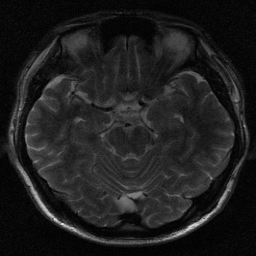}\\
					&GT&SENSE (27.43) & ESPIRiT(29.59)  &ESPIRiT-L$_1$(30.59)  & GRAPPA (30.43) & Ours (\textbf{31.961})\\			
				\end{tabular}
			\end{minipage}
		\end{center}
		\caption{Images for parallel imaging from multi-channel data collected by 8 coils with the acceleration factor of 2.0. PNSR values are given in parentheses after corresponding algorithms.}
		\label{fig:parallelImaging}
	\end{figure*}

\begin{figure}[!htbp]
	\begin{center}
		\begin{tabular}{l@{\extracolsep{-0.65em}}c@{\extracolsep{0.1em}}c@{\extracolsep{0.1em}}c@{\extracolsep{0.1em}}c@{\extracolsep{0.1em}}c@{\extracolsep{0.1em}}c}
			&\includegraphics[height=.058\textheight]{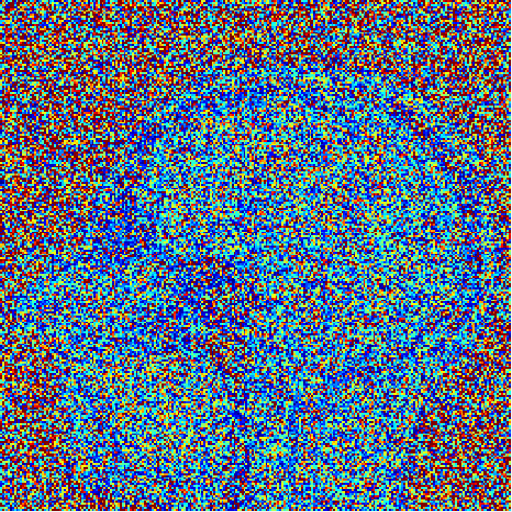}
			&\includegraphics[height=.058\textheight]{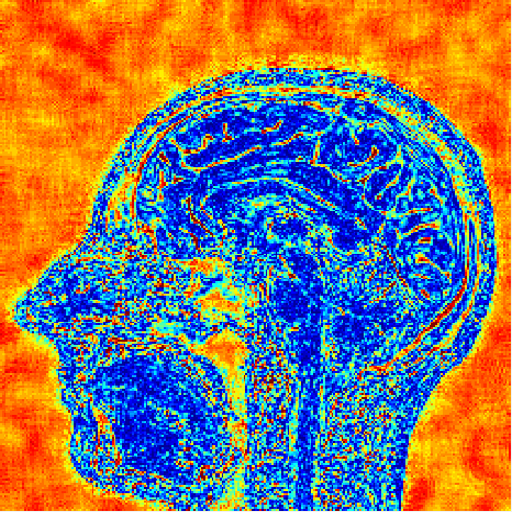}
			&\includegraphics[height=.058\textheight]{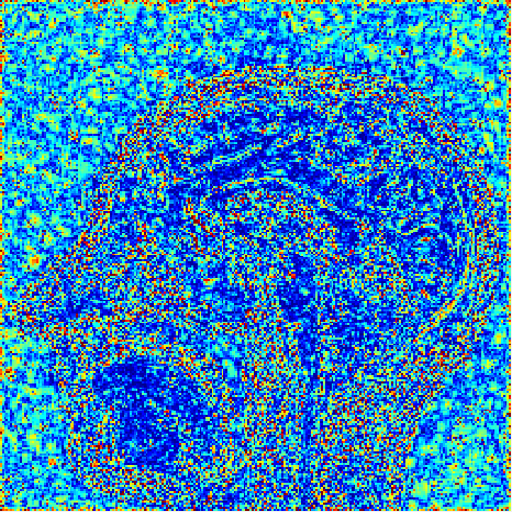}
			&\includegraphics[height=.058\textheight]{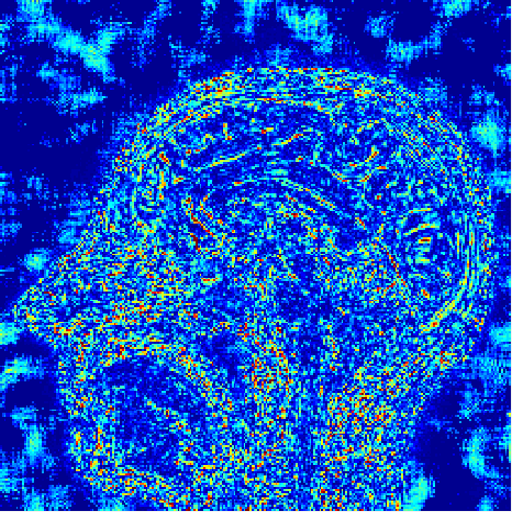}
			&\includegraphics[height=.058\textheight]{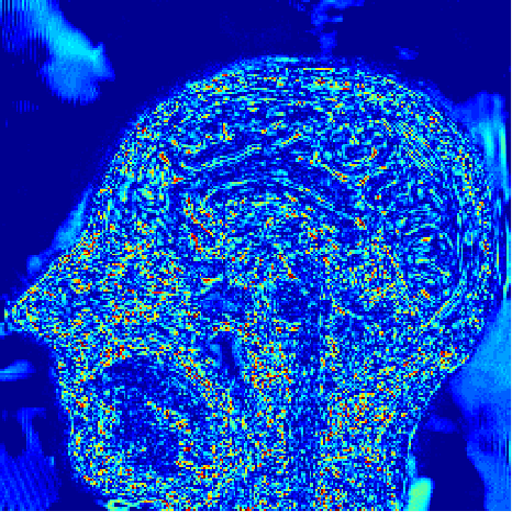}
			&\includegraphics[height=.0582\textheight]{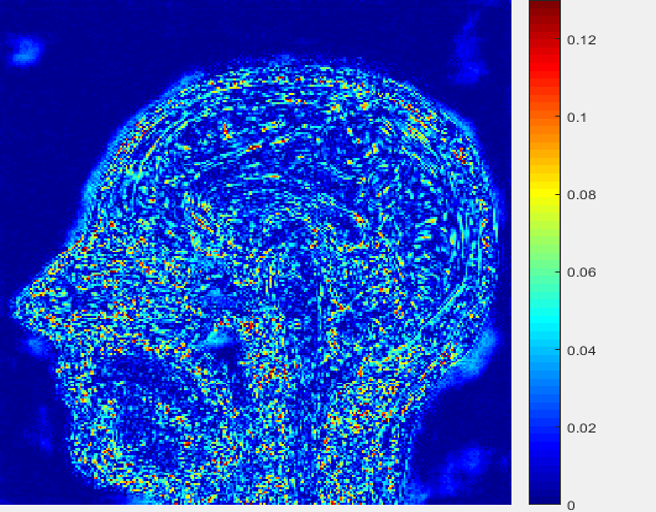}\\
			& Noisy & LMMSE & NLM & UNLM & RiceVST & Ours\\
			& (20.77) & (22.82) & (25.76) & (29.23) & (29.93) & \textbf{(30.84)}\\
		\end{tabular}
	\end{center}
	\caption{Heat maps (Blue-lower, Red-higher) of Rician denoisers with the noise level of 20. PSNR values are given below corresponding algorithms.}
	\label{fig:NoisyError}
\end{figure}

\begin{figure}[!htbp]
	\begin{center}
		\begin{tabular}{l@{\extracolsep{-1em}}c@{\extracolsep{0.6em}}c@{\extracolsep{0em}}c}
			&\includegraphics[height=.15\textheight]{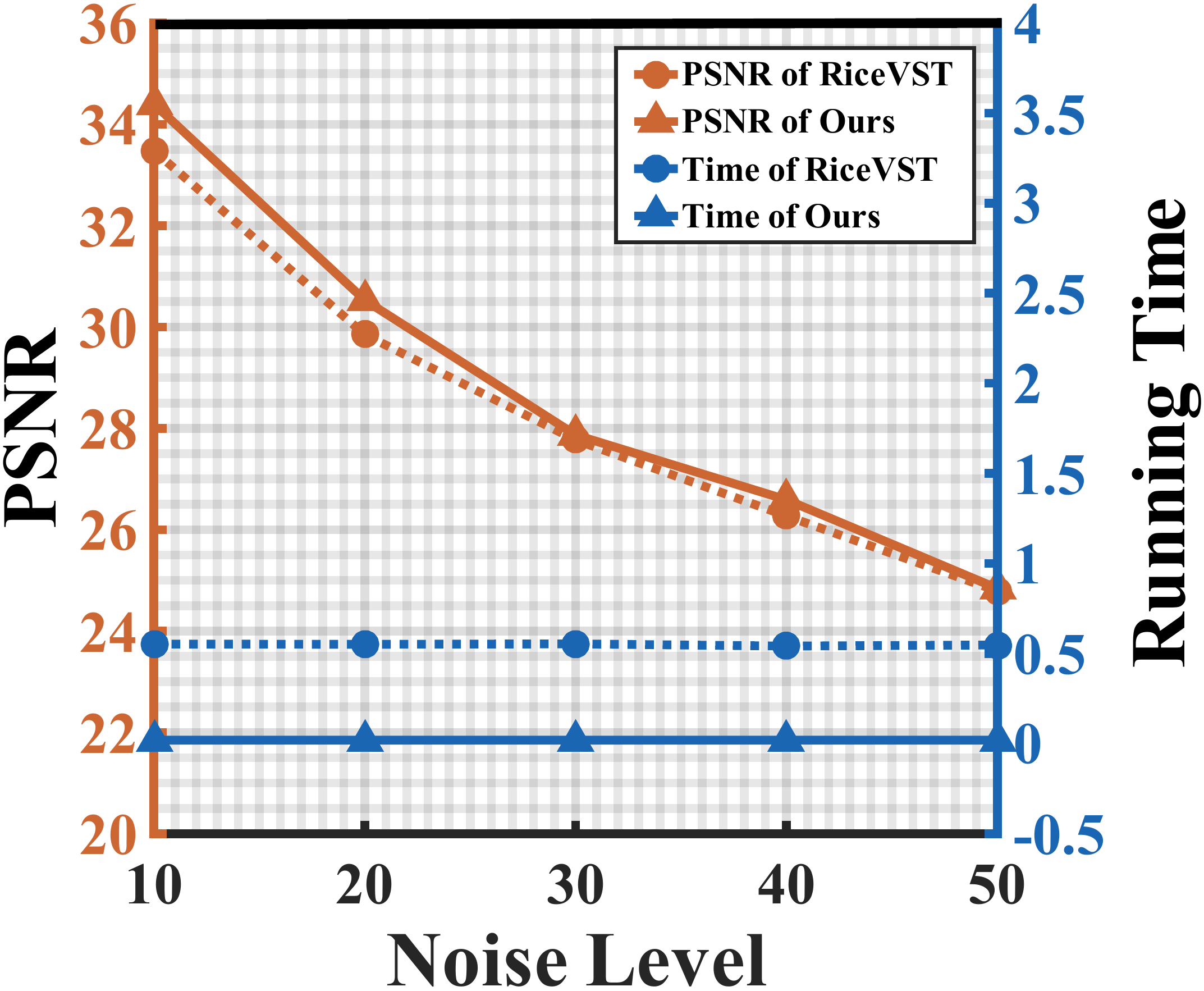}
			&\includegraphics[height=.15\textheight]{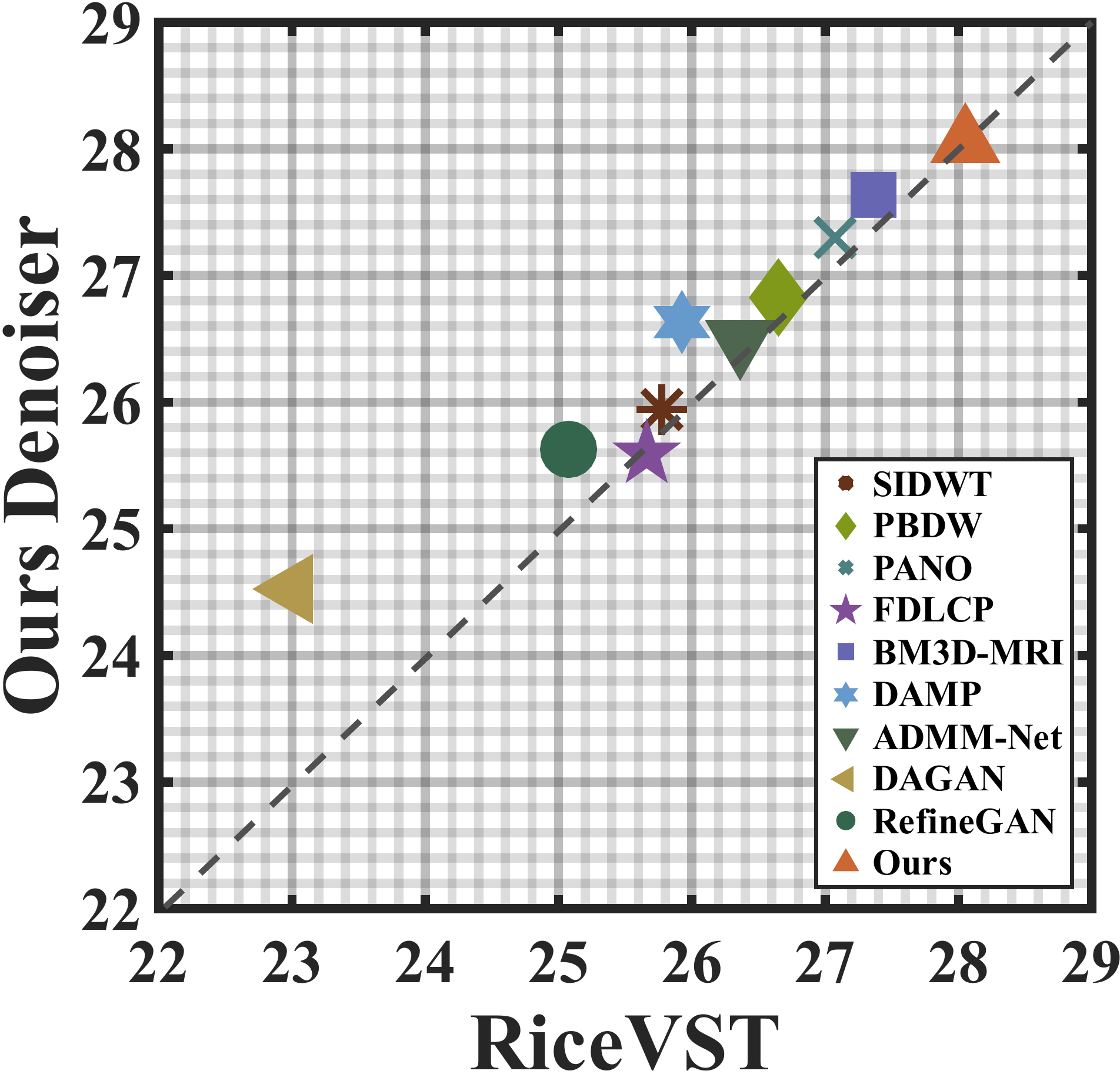}\\
		\end{tabular}
	\end{center}
	\caption{Left: PSNR and running time at different noise levels. Right: PSNR values by reconstruction algorithms with RiceVST ($x$-axis) and the two-stage deep denoiser ($y$-axis) cascaded as post-processing at the Rician noise level 20. Ours embeds the deep denoiser in every iteration.}
	\label{fig:NoisySigma}
\end{figure}

	\begin{figure*}[!htbp]
	\begin{center}
		\begin{tabular}{l@{\extracolsep{-0.2em}}c@{\extracolsep{0.25em}}c@{\extracolsep{0.25em}}
				c@{\extracolsep{0.25em}}c@{\extracolsep{0.25em}}c@{\extracolsep{0.25em}}c@{\extracolsep{0.25em}}c}
			&\includegraphics[width=.16\textwidth]{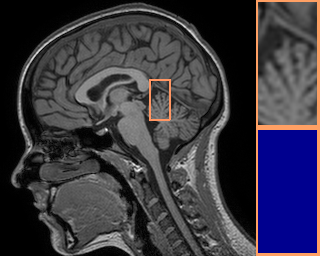}
			&\includegraphics[width=.16\textwidth]{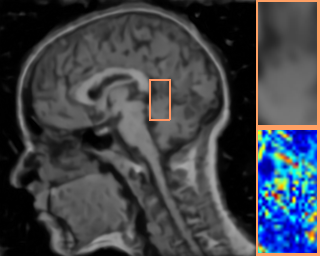}
			&\includegraphics[width=.16\textwidth]{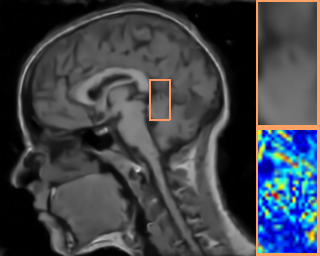}
			&\includegraphics[width=.16\textwidth]{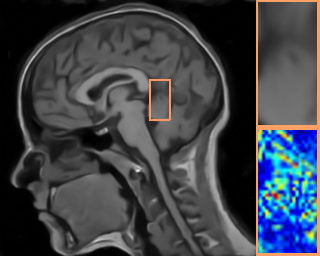}
			&\includegraphics[width=.16\textwidth]{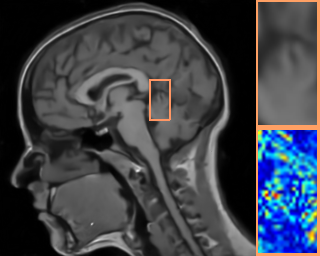}
			&\includegraphics[width=.16\textwidth]{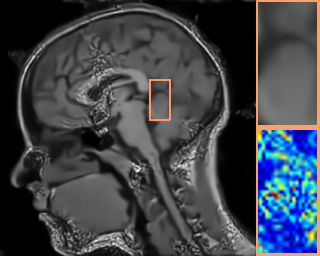}\\
			&Ground Truth & ZeroFilling (24.22)  & SIDWT (25.50) & PBDW (26.64) & PANO (27.03)& FDLCP (25.00)\\
			&\includegraphics[width=.16\textwidth]{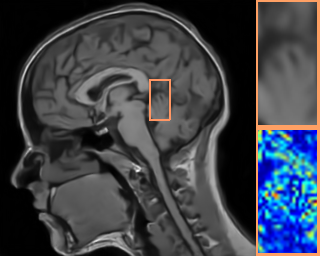}
			&\includegraphics[width=.16\textwidth]{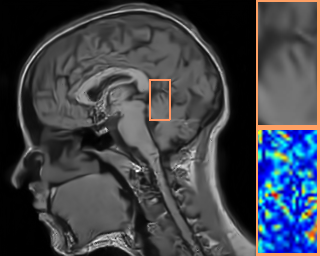}
			&\includegraphics[width=.16\textwidth]{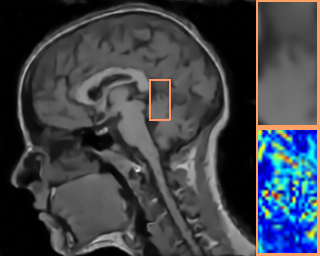}
			&\includegraphics[width=.16\textwidth]{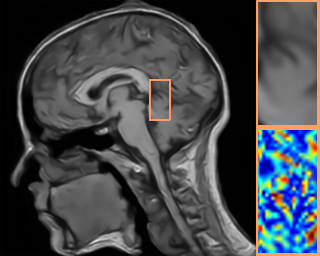}
			&\includegraphics[width=.16\textwidth]{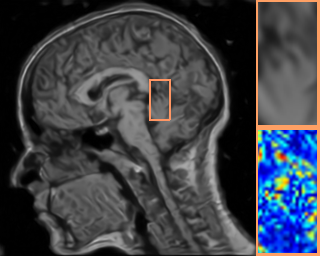}
			&\includegraphics[width=.16\textwidth]{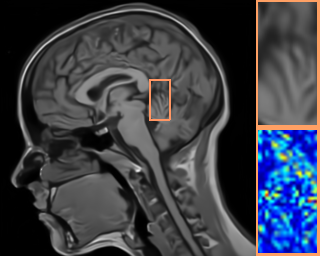}\\
			& BM3D-MRI (27.36)& DAMP (26.27) & ADMM-Net(26.09)&DAGAN(24.19)&RefineGAN(25.12)& Ours\textbf{(27.88)}\\
		\end{tabular}
	\end{center}
	\caption{Images, zoomed-in details, and heat maps reconstructed by traditional CS-MRI methods with the two-stage deep denoiser as post-processing. Ours combines reconstruction with the denoiser in one framewok.The parenthesized value gives the resultant PSNR.}
	\label{fig:noisyDetail}
\end{figure*}

\section{Experiments}
	This section first explores the impact of each module in our framework on accuracy and convergence behavior. 
	Next, we compare our algorithm with the state-of-the-art CS-MRI methods to demonstrate its superiority on accuracy, efficiency, and robustness. We also demonstrate on raw $k$-space data provided by an MR manufacturer Prodiva 1.5T scanner (Philips Healthcare, Best, Netherlands) besides benchmarks.
	Further, we conduct experiments on multi-coils data and Rician noisy data. 
	All experiments were executed on a PC with Intel(R) CPU @3.0GHz 256GB RAM and a NVIDIA TITAN Xp\footnote{Source codes and testing data are available at https://github.com/dlut-dimt/TGDF-TMI}.

\subsection{Ablation analysis}
	First we evaluate four combinations of modules in our framework to figure out their roles. The first case $\mathcal{F}\!\!\to\!\!\mathcal{P}$ performs reconstruction with the constraint of sparsity prior, and the second one $\mathcal{F}\!\!\to\!\!\mathcal{N}$ combines the fidelity and data-driven modules in order to demonstrate the performance of deep networks. We use both sparsity and deep priors without the optimal condition module as the third case $\mathcal{F}\!\!\to\!\!\mathcal{N}\!\!\to\!\!\mathcal{P}$. The entire paradigm with all modules is the last choice \emph{Ours}. We apply these strategies on $T_1$-weighted data using the Ridial sampling pattern with 20\% ratio. The stopping criterion for iterations is set as $\|\mathbf{x}^{k+1} - \mathbf{x}^{k}\|/\|\mathbf{x}^{k}\| \leq 1e-4$. 

	Figure~\ref{iteration} illustrates that the loss of $\mathcal{F}\!\!\to\!\!\mathcal{P}$ is slightly larger than that of $\mathcal{F}\!\!\to\!\!\mathcal{N}$ at the first several iterations. The data-driven module is more significant when image quality is low in the beginning. As the process goes on, the loss for $\mathcal{F}\!\!\to\!\!\mathcal{N}$ fluctuates and its PSNR decreases, showing unstable inference, while iterations for $\mathcal{F}\!\!\to\!\!\mathcal{P}$ keep stable. Combining deep and sparsity priors $\mathcal{F}\!\!\to\!\!\mathcal{N}\!\!\to\!\!\mathcal{P} $ improves performance over the former two, but converges not so stable and fast as \emph{Ours}. Higher PSNR values of \emph{Ours} demonstrate more accurate reconstruction. The execution time of \emph{Ours} is 2.5225s, less than that of $\mathcal{F}\!\!\to\!\!\mathcal{P}$ (4.4762s), $\mathcal{F}\!\!\to\!\!\mathcal{N}$ (3.3240s), and $\mathcal{F}\!\!\to\!\!\mathcal{N}\!\!\to\!\!\mathcal{P}$~(6.2760s). 
	Visual results in Fig.~\ref{component} also verify the superior quality of \emph{Ours} over the other three.
	
\subsection{Comparisons with the State-of-the-art}
	We compared our method with eleven CS-MRI techniques, including eight model based, Zero-Filling \cite{bernstein2001effect}, TV \cite{lustig2007sparse}, SIDWT \cite{baraniuk2007compressive}, PBDW \cite{qu2012undersled}, PANO \cite{qu2014magnetic}, FDLCP \cite{zhan2016fast}, BM3D-MRI \cite{eksioglu2016decoupled} and DAMP~\cite{eksioglu2018denoising}, as well as three learning based,  ADMM-Net~\cite{sun2016deep}, DAGAN~\cite{yang2017dagan} and RefineGAN~\cite{quan2018compressed}. The latter three methods and ours were tested on GPU.	
		
\textbf{Comparisons on benchmark sets:}
	We randomly selected 25 $T_1$-weighted and 25 $T_2$-weighted MRI data from 50 different subjects in the IXI dataset\footnote{http://brain-development.org/ixi-dataset/} for testing in order to evaluate reconstruction accuracy, time consumption and robustness to data variations. Three commonly used sampling patterns are adopted,~\emph{i.e.}, the Cartesian~\cite{qu2012undersled}, Radial~\cite{sun2016deep} and Gaussian~\cite{yang2017dagan}, with the sampling ratio varying from 10\% to 50\%. 
	
	The parameter $\rho$ in $\mathcal{F}$ is set as 5 and noise level in $\mathcal{N}$ ranges from 3.0 to 49.0. The parameters $L$ ($\eta=2/L$), $\lambda$ and $p$ in the modules $\mathcal{C}$ and $\mathcal{P}$ are set as $1.1$, $10^{-5}$ and $0.8$, respectively. The maximum number of iterations is 50. Comparative approaches take the parameter settings as their respective papers. We re-finetune deep models in DAGAN and RefineGAN upon their own using image pairs generated by various patterns and sampling ratios for fair comparisons.
	
	First, we evaluate reconstruction accuracy on benchmark data using three sampling patterns in terms of PSNR and RLNE. Table~\ref{tab:patterns} shows that ours leads all the competitors by a large margin for all sampling patterns. 
	Next, reconstruction efficiency is explored via comparisons on time consumption and reconstruction error with the Radial sampling  at a 50\% ratio. The marker at the lower left of the left plot in Fig.~\ref{fig:TimeRatio}  indicates that our method gives the least reconstruction error with a comparable low processing time, balancing accuracy and efficiency. 	
	Third, we use the Gaussian mask at five sampling ratios varying from 10\% to 50\%. As shown in the right plot of Fig.~\ref{fig:TimeRatio}, ours have the highest PSNR values over all others at all sampling ratios, demonstrating its robustness. 
	
	These comparisons demonstrates that traditional methods give ideal accuracy but suffer from large time consumption, while learning based ones enjoy fast forward inference but depend so highly on training data that testing data deviating from training distribution may severely degrade the performance.
	
	\textbf{Comparisons on Real Complex Data:}
	 We conduct comparisons on real $k$-space data directly collected from a Philip machine to investigate the performance for practical applications. Many algorithms like BM3D-MRI and those learning based ones consider no complex form of these real data. The flexibility of our framework enables us to directly deal with complex data by the modules $\mathcal{F,C}$ and $\mathcal{P}$. We first separately handle the real and imaginary parts, and then merge to the  complex form for the module $\mathcal{N}$. In this experiment, the noise level $\mathcal{N}$ ranges from 20.0 to 49.0. The parameter $L$ in $\mathcal{C}$ and $\mathcal{P}$ is set to 1.7, and the maximum iteration number is 60. 
	
	The PSNR, SSIM and RLNE values of our method in Tab.~\ref{tab:RealData} demonstrate more accurate reconstruction than the others on real data. Figure~\ref{fig:RealData} gives the qualitative comparisons with enlarged details and error heat maps. Our method produces sharper textures and less error, showing the superior capability.

	\subsection{Experiments on Parallel Imaging}
	We perform reconstruction from multi-coils by comparing with four PI techniques, SENSE~\cite{pruessmann1999sense}, GRAPPA~\cite{griswold2005parallel}, SPIRiT~\cite{lustig2010spirit} and $L_1$-SPIRiT~\cite{murphy2012fast}. 
	Table~\ref{tab:parallelImaging} shows that our approach achieves higher accuracy in less time, possessing an ideal ability to handle multi-coil data. The top row of Fig.~\ref{fig:parallelImaging} visualizes the input from multiple coils, and the lower row shows the reconstruction using different techniques, where our method restores richer details with higher PSNR.
	
	\subsection{Experiments on reconstruction with Rician noise}		
	The two-stage deep networks solving the subproblem~\eqref{eq:rician_sub_x} acts as a Rician denoiser. We first compare the strategy with classical Rician denoisers including NLM~\cite{buades2005review}, UNLM~\cite{manjon2008mri}, LMMSE~\cite{aja2008restoration} and RiceVST~\cite{foi2011noise}. We generate training data by adding Rician noise at different levels to 500 $T_1$-weighted MRI data randomly chosen from the MICCAI 2013 grand challenge dataset\footnote{http://masiweb.vuse.vanderbilt.edu/workshop2013/index.php\\/Segmentation\_Challenge\_Details}.		
	Figure~\ref{fig:NoisyError} shows error heatmaps between the ground truth and denoised image by various techniques. Our strategy successfully resolves the non-additivity so that it attenuates noise more effectively than the others, especially for background. RiceVST performs close to ours, and hence we further compare with RiceVST on five noise levels varying from 10 to 50, shown in Fig.~\ref{fig:NoisySigma}. Ours outputs higher PSNR at levels less than 30 with much less time than RiceVST.
	
	Our framework naturally embeds the two-stage deep denoiser into iterations for reconstruction. We investigate the performance of this one-set solution for reconstruction with Rician noise on $T_1$-weighted data in the IXI dataset. The parameters $\rho_1$, $\lambda_1$ and  $\lambda_2$ in~\eqref{eq:rician_sub_z} and~\eqref{eq:rician_sub_x} are set as 0.01, 1.0 and 1.0, respectively. For fair comparisons, we cascade RiceVST or our two-stage denoiser to the other reconstruction algorithms as post-processing. The right plot of Fig.~\ref{fig:NoisySigma} illustrates PSNR values reconstructed by the comparative algorithms with RiceVST ($x$-axis) and our two-stage denoiser ($y$-axis) cascaded. The brown triangle denotes PSNR ($28$dB) obtained by our one-set solution. This figure shows that a) our framework combing reconstruction with denoising produces the best quality over the other reconstruction algorithms with post-processing, and b) our learnable two-stage denoiser performs better than RiceVST as post-pocessing for reconstruction. Figure~\ref{fig:noisyDetail} provides an visual illustration for reconstruction with noise, where our framework also preserves more details and introduces lower errors.

	\section{Conclusion}
	We propose a theoretically converged deep framework for CS-MRI reconstruction. This paradigm takes advantages of both model-based and deep representation, improving accuracy and efficiency. Further, we devise an optimal condition that guides iterative propagation converging to the critical point of the energy with priors, guaranteeing reliability. We also apply this framework to parallel imaging and reconstruction with Rician noise for practical scenarios, without needing significant data re-train. Extensive experiments demonstrate the superiority of our framework over the state-of-the-art. Future work directs to applications for more tasks like low-dose CT, and to incorporation of other effective deep architectures. 
	
	\section*{Acknowledgments}	
	This paper extends from the conference version published at AAAI 2019~\cite{liu2019theoretically}. The authors would like to thank MR method manager Drs. Chenguang Zhao and sequence engineer Yuli Huang at Philips Healthcare (Suzhou) Co. Ltd. for providing testing data from Prodiva 1.5T scanner (Philips Healthcare, Best, Netherlands). This work is partially supported by the National Natural Science Foundation of China (Nos. 61922019, 61672125, 61572096 and 61632019), and the Fundamental Research Funds for the Central Universities.

	\section*{Supplemental Materials}\label{sec:proofs}
	\subsection*{Proof}
	To simplify the following derivations, we first rewrite the function in Eq.~\eqref{eq:OriginalModel} as
	$$
	\begin{array}{l}
	\Phi (\bm{\alpha}) = f(\bm{\alpha}) + g(\bm{\alpha}) =\frac{1}{2}\| \mathbf{PF\mathbf{A}\bm{\alpha}-y}\|_2^2+\lambda \|\bm{\alpha}\|_p.
	\end{array}
	$$
	We first provide detailed explanations about the properties of $f, g$, and $\Phi$.
	
	\begin{itemize}
		\item $f(\bm{\alpha})$ is proper if $\mathtt{dom}f:=\{\bm{\alpha}\in\mathbb{R}^D: f(\bm{\alpha})<+\infty\}$ is nonempty.
		\item $f(\bm{\alpha})$ is $L$-Lipschitz smooth if $f$ is differentiable and there exists $L>0$ such that 
		$$
		\|\nabla f(\bm{\alpha}) - \nabla f(\bm{\beta})\| \leq L \|\bm{\alpha} - \bm{\beta}\|, \ \forall \ \bm{\alpha},\bm{\beta} \in \mathbb{R}^{D}.
		$$
		If f is $L$-Lipschitz smooth, we have the following inequality
		$$
		f(\bm{\alpha})\!\leq\! f(\bm{\beta}) + \langle \nabla f(\bm{\beta}),\! \bm{\alpha}-\bm{\beta}\rangle + \frac{L}{2}\|\bm{\alpha}-\bm{\beta}\|^2, \ \forall \bm{\alpha}, \bm{\beta}\!\in\!\mathbb{R}^D.
		$$
		\item $g(\bm{\alpha})$ is lower semi-continuous if $\liminf \limits_{\bm{\alpha}\to\bm{\beta}}g(\bm{\alpha})\geq g(\bm{\beta})$ at any point $\bm{\beta}\in\mathtt{dom}g$.
		\item $\Phi(\bm{\alpha})$ is coercive if $\Phi$ is bounded from below and $\Phi\to\infty$ if $\|\bm{\alpha}\|\to\infty$, where $\|\cdot\|$ is the $\ell_2$ norm.
		\item $\Phi(\bm{\alpha})$ is is said to have the Kurdyka-{\L}ojasiewicz (K\L) property at $\bar{\bm{\alpha}}\in\mathtt{dom}\partial \Phi:=\{\bm{\alpha}\in\mathbb{R}^D: \partial g(\bm{\alpha}) \neq \emptyset\}$ if there exist $\eta\in(0,\infty]$, a neighborhood $\mathcal{U}_{\bar{\bm{\alpha}}}$ of $\bar{\bm{\alpha}}$ and a desingularizing function $\phi:[0,\eta)\to \mathbb{R}_+$ which satisfies (1) $\phi$ is continuous at $0$ and $\phi(0)=0$; (2) $\phi$ is concave and $C^1$ on $(0,\eta)$; (3) for all $s\in(0,\eta): \phi'(s)>0$, such that for all
		$$
		\bm{\alpha}\in\mathcal{U}_{\bar{\bm{\alpha}}}\cap[\Phi(\bar{\bm{\alpha}})<\Phi(\bm{\alpha})<\Phi(\bar{\bm{\alpha}})+\eta],
		$$
		the following inequality holds
		$$
		\phi'(\Phi(\bm{\alpha})-\Phi(\bar{\bm{\alpha}}))\mathtt{dist}(0,\partial \Phi(\bm{\alpha})) \geq 1.
		$$
		Moreover, if $\Phi$ satisfies the K{\L} property at each point of $\mathtt{dom}\partial \Phi$ then $\Phi$ is called a K{\L} function.
	\end{itemize}
	\begin{proposition}\label{prop:c-error}
		Let $ \left\{\bm{\alpha}^k\right\}_{k\in\mathbb{N}} $ and  $\left\{\bm{\beta}^k\right\}_{k\in\mathbb{N}} $ be the sequences generated by Alg.\ref{alg1}. Suppose that the error condition 
		$\|\mathbf{v}^{k+1}-\bm{\alpha}^{k}\| \leq \varepsilon^{k}\|\bm{\beta}^{k+1}-\bm{\alpha}^{k}\|$ 
		in our $\mathtt{icheck}$ is satisfied. Then there existed a sequence $\{C^{k}\}_{k\in\mathbb{N}}$ such that 
		\begin{equation}	
		\Phi(\bm{\beta}^{k+1}) \leq \Phi(\bm{\alpha}^k)-C^{k}\|\bm{\beta}^{k+1}-\bm{\alpha}^k\|^2,
		\end{equation}
		where $C^{k} = \frac{1}{2\eta_{1}} - \frac{L_f}{2} -  (L_f  + |\rho-\frac{1}{\eta_{1}}|)\epsilon^{k} >0$ and $L_f$ is the Lipschitz coefficient of $\nabla f$ .
		\label{eq:ineq_fun_pgmomentum}
	\end{proposition}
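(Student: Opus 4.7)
The plan is to decompose $\Phi(\bm{\beta}^{k+1}) - \Phi(\bm{\alpha}^k)$ into its smooth and nonsmooth parts and estimate each separately. For the smooth part, property (a) (Lipschitz smoothness of $f$ with constant $L_f$) yields the descent lemma
\[
f(\bm{\beta}^{k+1}) \leq f(\bm{\alpha}^k) + \langle \nabla f(\bm{\alpha}^k),\, \bm{\beta}^{k+1}-\bm{\alpha}^k\rangle + \tfrac{L_f}{2}\|\bm{\beta}^{k+1}-\bm{\alpha}^k\|^2.
\]
For the nonsmooth part $g(\bm{\alpha}) := \lambda\|\bm{\alpha}\|_p$, I would exploit the defining variational inequality of $\bm{\beta}^{k+1}$ as the prox point in~\eqref{eq:checkingprox}: setting $\mathbf{d}^{k+1} := \nabla f(\mathbf{v}^{k+1}) + \rho(\mathbf{v}^{k+1}-\bm{\alpha}^k)$ and testing optimality against $\bm{\alpha}^k$ gives
\[
g(\bm{\beta}^{k+1}) - g(\bm{\alpha}^k) \leq \tfrac{1}{2\eta_1}\bigl(\|\bm{\alpha}^k - \mathbf{v}^{k+1}\|^2 - \|\bm{\beta}^{k+1} - \mathbf{v}^{k+1}\|^2\bigr) + \langle \mathbf{d}^{k+1},\, \bm{\alpha}^k - \bm{\beta}^{k+1}\rangle.
\]

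Next I would add the two bounds and use the identity $\|\bm{\alpha}^k-\mathbf{v}^{k+1}\|^2 - \|\bm{\beta}^{k+1}-\mathbf{v}^{k+1}\|^2 = -\|\bm{\beta}^{k+1}-\bm{\alpha}^k\|^2 - 2\langle \bm{\beta}^{k+1}-\bm{\alpha}^k,\, \bm{\alpha}^k-\mathbf{v}^{k+1}\rangle$ to reorganise the right-hand side into three groups: a pure quadratic $\bigl(L_f/2 - 1/(2\eta_1)\bigr)\|\bm{\beta}^{k+1}-\bm{\alpha}^k\|^2$; a gradient-mismatch inner product $\langle \nabla f(\bm{\alpha}^k) - \nabla f(\mathbf{v}^{k+1}),\, \bm{\beta}^{k+1}-\bm{\alpha}^k\rangle$; and a single merged cross term $(\rho - 1/\eta_1)\langle \bm{\alpha}^k - \mathbf{v}^{k+1},\, \bm{\beta}^{k+1} - \bm{\alpha}^k\rangle$. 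Cauchy--Schwarz together with the $L_f$-Lipschitzness of $\nabla f$ bounds the first inner product by $L_f\|\mathbf{v}^{k+1}-\bm{\alpha}^k\|\|\bm{\beta}^{k+1}-\bm{\alpha}^k\|$ and the second by $|\rho-1/\eta_1|\|\mathbf{v}^{k+1}-\bm{\alpha}^k\|\|\bm{\beta}^{k+1}-\bm{\alpha}^k\|$. Invoking the hypothesis $\|\mathbf{v}^{k+1}-\bm{\alpha}^k\| \leq \varepsilon^k\|\bm{\beta}^{k+1}-\bm{\alpha}^k\|$ converts both into multiples of $\|\bm{\beta}^{k+1}-\bm{\alpha}^k\|^2$, and collecting coefficients delivers the desired inequality with $C^k$ exactly as stated.

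The principal obstacle is algebraic bookkeeping rather than any deep inequality: one must recognise that the $\rho$-contribution coming from $\mathbf{d}^{k+1}$ and the $(1/\eta_1)$-contribution coming from the quadratic identity have to be combined \emph{before} being bounded, so that the linear cross-terms coalesce into the single factor $(\rho - 1/\eta_1)$ and hence produce $|\rho - 1/\eta_1|$ after Cauchy--Schwarz; bounding them term-by-term would only yield the looser constant $L_f + \rho + 1/\eta_1$. Positivity of $C^k$ is then a standard step-size condition: for any fixed $\eta_1 < 1/L_f$ it suffices to ensure that $\varepsilon^k < \bigl(1/\eta_1 - L_f\bigr)\big/\bigl(2L_f + 2|\rho - 1/\eta_1|\bigr)$, which the feedback mechanism in module $\mathcal{C}$ can guarantee by adaptively shrinking the tolerance.
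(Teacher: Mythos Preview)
Your proposal is correct and follows essentially the same route as the paper's proof: both combine the descent lemma for $f$ with the optimality inequality for the prox defining $\bm{\beta}^{k+1}$, then bound the resulting cross terms via Cauchy--Schwarz, the $L_f$-Lipschitz property, and the error condition. The only cosmetic difference is that the paper re-centres the prox quadratic at $\bm{\alpha}^k$ from the outset (so the coefficient $(\rho-1/\eta_1)$ appears immediately), whereas you keep the natural centre $\mathbf{v}^{k+1}$ and recover the same combination via the polarization identity; the two manipulations are algebraically equivalent and yield the identical constant $C^k$.
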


	\begin{proof}
		In our $\mathtt{icheck}$ stage, we have 
		$$
		\bm{\beta}^{k+1}\!\in\!\emph{prox}_{\eta_1 \lambda\|\cdot\|_p}
		\!\left(\mathbf{v}^{k+1}\!-\!\eta_1\nabla f\left(\mathbf{v}^{k+1}\right)\!+	
		\!\rho\left(\mathbf{v}^{k+1}\!-\!\bm{\alpha}^{k}\right)\right),
		$$
		if the error condition is satisfied.
		Thus, by the definition of the proximal operator we get 
		\begin{equation}
		\begin{aligned}
		\bm{\beta}^{k+1} 
		&\!\in\!\arg\min\limits_{\bm{\beta}} \eta_{1}\lambda\|\bm{\beta} \|_{p}\! +\!\frac{1}{2}\|\bm{\beta}\! - \!\mathbf{v}^{k+1}\!+\!\eta_1 \left( \nabla f\left(\mathbf{v}^{k+1}\right)\right.\\
		&\left.\quad +\rho\left(\mathbf{v}^{k+1}\!-\bm{\alpha}^{k}\right) \right) \|^2 \\
		&= \arg\min\limits_{\bm{\beta}} \eta_{1}\lambda\|\bm{\beta} \|_{p} \!+
		\!\frac{1}{2}\|\bm{\beta} \!-\!\bm{\alpha}^{k}\!+\! \eta_1\nabla f\left(\mathbf{v}^{k+1}\right)	\\
		&\quad \  +(\eta_1\rho-1)\left(\mathbf{v}^{k+1}-\bm{\alpha}^{k}\right) \|^2\\
		&=\arg\min\limits_{\bm{\beta}} \lambda\|\bm{\beta} \|_{p} + \frac{1}{2\eta_{1}}\|\bm{\beta} -\bm{\alpha}^{k} \|^2 +	\\ &\quad \left\langle\bm{\beta} -	\bm{\alpha}^{k}, \nabla f\left(\mathbf{v}^{k+1}\right)+
		(\rho-\frac{1}{\eta_{1}})\left(\mathbf{v}^{k+1}-\bm{\alpha}^{k}\right)  \right\rangle.
		\end{aligned}
		\label{eq:min_func_pgmoment}
		\end{equation}
		Hence in particular, taking $\bm{\beta} = \bm{\beta}^{k+1}$ and $\bm{\beta} = \bm{\alpha}^{k}$ respectively, we obtain
		\begin{equation}
		\begin{array}{l}
		\lambda\|\bm{\beta}^{k+1}\|_{p} + \frac{1}{2\eta_{1}}\|\bm{\beta}^{k+1} -\bm{\alpha}^{k} \|^2 + \left\langle\bm{\beta}^{k+1} -\bm{\alpha}^{k}, \right.	\\
		\nabla f\left(\mathbf{v}^{k+1}\right)
		\left.+(\rho-\frac{1}{\eta_{1}})\left(\mathbf{v}^{k+1}-\bm{\alpha}^{k}\right)  \right\rangle 
		\leq \lambda\|\bm{\alpha}^{k}\|_{p}.
		\end{array}\label{eq:ineq_g}
		\end{equation}
		Invoking the Lipschitz smooth property of $f$, we also have
		\begin{equation}
		f(\bm{\beta}^{k+1})\! \leq\! f(\bm{\alpha}^{k})\!  +  \!\left\langle\bm{\beta}^{k+1}\!-\!\bm{\alpha}^{k},\! \nabla f(\bm{\alpha}^{k}) \right\rangle\! +\! \frac{L_f}{2}\| \bm{\beta}^{k+1} \!-\!\bm{\alpha}^{k}\|^2.\label{eq:ineq_f}
		\end{equation} 
		Combing Eqs.~\eqref{eq:ineq_g} and \eqref{eq:ineq_f}, we obtain
		$$
		\begin{array}{l}
		\begin{aligned}
		\!\!\!\Phi(\bm{\beta}^{k+1}) &\leq \Phi(\bm{\alpha}^{k})\! - \!\frac{1}{2\eta_{1}}\|\bm{\beta}^{k+1}\! -\!\bm{\alpha}^{k} \|^2 \!+\! \frac{L_f}{2}\| \bm{\beta}^{k+1} \!-\!\bm{\alpha}^{k}\|^2\\
		&\quad+ \left\langle \bm{\beta}^{k+1} -\bm{\alpha}^{k}, \nabla f(\bm{\alpha}^{k})  - \nabla f\left(\mathbf{v}^{k+1}\right)	\right.\\
		&\quad\left.-(\rho-\frac{1}{\eta_{1}})\left(\mathbf{v}^{k+1}-\bm{\alpha}^{k}\right)\right\rangle \\
		&\leq \Phi(\bm{\alpha}^{k})\! -\!  \frac{1}{2\eta_{1}}\|\bm{\beta}^{k+1}\!  -\! \bm{\alpha}^{k} \|^2 \! +\!  \frac{L_f}{2}\| \bm{\beta}^{k+1} \! -\! \bm{\alpha}^{k}\|^2 \\
		&\quad+(L_f  + |\rho-\frac{1}{\eta_{1}}|)\epsilon^{k} \|\bm{\beta}^{k+1} -\bm{\alpha}^{k} \|^2 \\
		&\leq \Phi(\bm{\alpha}^{k})\!  -\! (\frac{1}{2\eta_{1}}\!  - \! \frac{L_f}{2}\!  - \!  (L_f \!  +\!  |\rho\! -\! \frac{1}{\eta_{1}}|)\epsilon^{k})  \|\bm{\beta}^{k+1} \\
		&\quad-\bm{\alpha}^{k} \|^2 \\
		&\leq \Phi(\bm{\alpha}^{k}) -C^{k}\|\bm{\beta}^{k+1} -\bm{\alpha}^{k} \|^2.
		\end{aligned}
		\end{array}
		$$
		The last inequality holds under the 
		assumption $C^{k} = \frac{1}{2\eta_{1}} - \frac{L_f}{2} -  (L_f  + |\rho-\frac{1}{\eta_{1}}|)\epsilon^{k}>0$ in the $k$-th iteration. So far, this prove that the inequality ~\eqref{eq:ineq_fun_pgmomentum} in Proposition~\ref{prop:c-error} holds.
	\end{proof}
	
	\begin{proposition}\label{prop:pg}
		If $\eta_2 < 1/L_f$, let $ \left\{\bm{\alpha}^k\right\}_{k\in\mathbb{N}} $ and $\left\{\mathbf{w}^k\right\}_{k\in\mathbb{N}}$ be the sequences generated by a proximal operator in Alg.\ref{alg1}. Then we have
		\begin{equation}
		\Phi(\bm{\alpha}^{k+1}) \leq \Phi(\mathbf{w}^{k+1})-({1}/({2\eta_2}) - {L_f}/{2})\|\bm{\alpha}^{k+1}-\mathbf{w}^{k+1}\|^2.\label{eq:ineq_fun_pg}
		\end{equation}	
	\end{proposition}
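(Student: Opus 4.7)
The plan is to imitate the argument used for Proposition~\ref{prop:c-error}, but in its simpler form without the momentum/correction term, so the proof should be cleaner. By construction of module $\mathcal{P}$, we have
$$
\bm{\alpha}^{k+1} \in \mathtt{prox}_{\eta_2\lambda\|\cdot\|_p}\bigl(\mathbf{w}^{k+1} - \eta_2 \nabla f(\mathbf{w}^{k+1})\bigr),
$$
which, by the definition of the proximal operator, means $\bm{\alpha}^{k+1}$ minimizes
$\eta_2 \lambda \|\bm{\alpha}\|_p + \tfrac{1}{2}\|\bm{\alpha} - \mathbf{w}^{k+1} + \eta_2 \nabla f(\mathbf{w}^{k+1})\|^2$ over $\bm{\alpha}$. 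After expanding the square and dropping terms independent of $\bm{\alpha}$, this is equivalent to minimizing
$$
\lambda\|\bm{\alpha}\|_p + \tfrac{1}{2\eta_2}\|\bm{\alpha}-\mathbf{w}^{k+1}\|^2 + \langle \bm{\alpha} - \mathbf{w}^{k+1},\ \nabla f(\mathbf{w}^{k+1})\rangle.
$$

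The first step is to plug the two test points $\bm{\alpha} = \bm{\alpha}^{k+1}$ and $\bm{\alpha}=\mathbf{w}^{k+1}$ into the optimality (function-value) inequality above. This yields
$$
\lambda\|\bm{\alpha}^{k+1}\|_p + \tfrac{1}{2\eta_2}\|\bm{\alpha}^{k+1}-\mathbf{w}^{k+1}\|^2 + \langle \bm{\alpha}^{k+1}-\mathbf{w}^{k+1},\ \nabla f(\mathbf{w}^{k+1})\rangle \leq \lambda\|\mathbf{w}^{k+1}\|_p,
$$
which is the analogue of~\eqref{eq:ineq_g} but without the extra $(\rho-1/\eta_1)$ correction because $\mathcal{P}$ has no momentum term.

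The second step is to apply the Lipschitz descent inequality for the smooth part $f$ at the two points $\bm{\alpha}^{k+1}$ and $\mathbf{w}^{k+1}$:
$$
f(\bm{\alpha}^{k+1}) \leq f(\mathbf{w}^{k+1}) + \langle \bm{\alpha}^{k+1}-\mathbf{w}^{k+1},\ \nabla f(\mathbf{w}^{k+1})\rangle + \tfrac{L_f}{2}\|\bm{\alpha}^{k+1}-\mathbf{w}^{k+1}\|^2.
$$
This is immediate from property (a) listed before the theorem block.

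Finally, adding the two inequalities makes the inner-product terms cancel and collects the two quadratic coefficients into $\tfrac{L_f}{2}-\tfrac{1}{2\eta_2}$, which gives exactly~\eqref{eq:ineq_fun_pg} after recognizing $\Phi = f + g$ on both sides. The assumption $\eta_2 < 1/L_f$ guarantees the coefficient $1/(2\eta_2) - L_f/2$ is strictly positive, so the bound is actually a \emph{descent} inequality. I do not expect any serious obstacle here: everything is a direct specialization of the reasoning already used in Proposition~\ref{prop:c-error}, and the only bookkeeping detail is the sign/grouping of the quadratic term when merging the two inequalities.
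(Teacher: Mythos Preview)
Your proposal is correct and follows essentially the same approach as the paper's own proof: rewrite the proximal step as an $\arg\min$, compare its value at $\bm{\alpha}^{k+1}$ and at $\mathbf{w}^{k+1}$ to bound the $g$-part, invoke the Lipschitz descent lemma for the $f$-part, and add. The paper even explicitly frames it as ``similar with the deduction in Proposition~\ref{prop:c-error},'' which is exactly what you did.
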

	
	\begin{proof}
		As the proximal stage shows in Alg.~\ref{alg1}, we have 
		\begin{equation}
		\begin{split}
		\bm{\alpha}^{k+1}  &\in \emph{prox}_{\eta_2 \lambda\|\cdot\|_p}\left(\mathbf{w}^{k+1}-\eta_2\nabla f\left(\mathbf{w}^{k+1}\right)\right)\\
		&= \arg\min\limits_{\bm{\alpha}} \lambda \| \bm{\alpha}\|_{p} + \frac{1}{2\eta_2}\| \bm{\alpha} -\mathbf{w}^{k+1} \|^2 +\\
		&\quad \langle\bm{\alpha} -\mathbf{w}^{k+1} ,\nabla f(\mathbf{w}^{k+1}) \rangle
		\label{eq:min_func_pg}.
		\end{split}		
		\end{equation}
		Similar with the deduction in Proposition~\ref{prop:c-error}, we obtain the following inequality:
		$$
		\begin{array}{l}
		\lambda\| \bm{\alpha}^{k+1}\|_{p} + \frac{1}{2\eta_{2}}\|\bm{\alpha}^{k+1} -\mathbf{w}^{k+1}  \|^{2} + 
		\left\langle \bm{\alpha}^{k+1} -\mathbf{w}^{k+1}  , \right.\\
		\left.\nabla f(\mathbf{w}^{k}) \right\rangle \leq \lambda\| \mathbf{w}^{k+1}\|_{p}, \\\\
		f(\bm{\alpha}^{k+1}) \leq f(\mathbf{w}^{k+1}) + \left\langle \bm{\alpha}^{k+1} -\mathbf{w}^{k+1}, \nabla f(\mathbf{w}^{k+1}) \right\rangle  \\
		+\frac{L_f}{2}\| \bm{\alpha}^{k+1} -\mathbf{w}^{k+1}\|^2.
		\end{array}
		$$
		Thus we get the conclusion
		$$
		\Phi(\bm{\alpha}^{k+1}) \leq \Phi(\mathbf{w}^{k+1})-(\frac{1}{2\eta_2} - \frac{L_f}{2})\|\bm{\alpha}^{k+1}-\mathbf{w}^{k+1}\|^2.
		$$
		
	\end{proof}

	\begin{theorem}
		Suppose that $ \left\{\bm{\alpha}^k\right\}_{k\in\mathbb{N}} $ be a sequence generated by Alg.~\ref{alg1}. The following assertions hold.
		\begin{itemize}
			\item The square summable of sequence $\left\{\bm{\alpha}^{k+1}-\mathbf{w}^{k+1} \right\}_{k\in\mathbb{N}}$ is bounded, i.e., 
			$$
			\sum_{k=1}^{\infty}\|\bm{\alpha}^{k+1}-\mathbf{w}^{k+1}\|^2 < \infty.
			$$
			\item The sequence $ \left\{\bm{\alpha}^k\right\}_{k\in\mathbb{N}} $ converges to a critical point $\bm{\alpha}^{*}$ of $\Phi$.
		\end{itemize}
	\end{theorem}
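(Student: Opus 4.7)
The plan is to reduce the theorem to the abstract KL--based convergence machinery of the Attouch--Bolte--Svaiter type. That machinery requires three ingredients for the sequence $\{\bm{\alpha}^{k}\}$: a sufficient-decrease inequality of the form $\Phi(\bm{\alpha}^{k+1})\le\Phi(\bm{\alpha}^{k})-a\|\bm{\alpha}^{k+1}-\bm{\alpha}^{k}\|^{2}$, a relative-error bound $\mathtt{dist}(0,\partial\Phi(\bm{\alpha}^{k+1}))\le b\|\bm{\alpha}^{k+1}-\bm{\alpha}^{k}\|$, and a continuity condition ensuring $\Phi(\bm{\alpha}^{k_{j}})\to\Phi(\bar{\bm{\alpha}})$ along any convergent subsequence. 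Combined with the KL property of $\Phi$ and coercivity (both listed in the excerpt), these imply global convergence to a critical point. Propositions~\ref{prop:c-error} and \ref{prop:pg} provide essentially all the descent information we need; the only real delicacy is handling the two branches of the checking operator $\mathcal{C}$.

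First I would establish the square-summability bullet. The key observation is that in either branch, $\Phi(\mathbf{w}^{k+1})\le\Phi(\bm{\alpha}^{k})$: if $\mathbf{w}^{k+1}=\bm{\alpha}^{k}$ this is trivial, and if $\mathbf{w}^{k+1}=\bm{\beta}^{k+1}$ it is exactly Proposition~\ref{prop:c-error}. Chaining this with Proposition~\ref{prop:pg} yields
\begin{equation*}
\Phi(\bm{\alpha}^{k+1})\le\Phi(\bm{\alpha}^{k})-\Bigl(\tfrac{1}{2\eta_{2}}-\tfrac{L_{f}}{2}\Bigr)\|\bm{\alpha}^{k+1}-\mathbf{w}^{k+1}\|^{2}.
\end{equation*}
Since $\Phi$ is coercive and lower-bounded, telescoping from $k=0$ to $\infty$ gives $\sum_{k}\|\bm{\alpha}^{k+1}-\mathbf{w}^{k+1}\|^{2}<\infty$, which is the first claim. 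This also gives boundedness of $\{\bm{\alpha}^{k}\}$ (again by coercivity) and the existence of limit points.

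For the second bullet I would verify the three hypotheses. The sufficient-decrease hypothesis in the non-trivial branch follows by combining Propositions~\ref{prop:c-error} and \ref{prop:pg} on the chain $\bm{\alpha}^{k}\to\bm{\beta}^{k+1}\to\bm{\alpha}^{k+1}$, then applying $\|\bm{\alpha}^{k+1}-\bm{\alpha}^{k}\|^{2}\le 2(\|\bm{\alpha}^{k+1}-\bm{\beta}^{k+1}\|^{2}+\|\bm{\beta}^{k+1}-\bm{\alpha}^{k}\|^{2})$ with $a=\tfrac{1}{2}\min(C^{k},\tfrac{1}{2\eta_{2}}-\tfrac{L_{f}}{2})$; the trivial branch $\mathbf{w}^{k+1}=\bm{\alpha}^{k}$ is immediate from Proposition~\ref{prop:pg}. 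The relative-error bound is obtained from the first-order optimality condition of the proximal step defining $\bm{\alpha}^{k+1}$: there exists $\xi^{k+1}\in\partial g(\bm{\alpha}^{k+1})$ with $\xi^{k+1}+\nabla f(\mathbf{w}^{k+1})=(\mathbf{w}^{k+1}-\bm{\alpha}^{k+1})/\eta_{2}$, so setting $\zeta^{k+1}=\xi^{k+1}+\nabla f(\bm{\alpha}^{k+1})\in\partial\Phi(\bm{\alpha}^{k+1})$ and using the Lipschitz smoothness of $f$ bounds $\|\zeta^{k+1}\|$ by a constant multiple of $\|\bm{\alpha}^{k+1}-\mathbf{w}^{k+1}\|$, which in turn is controlled by $\|\bm{\alpha}^{k+1}-\bm{\alpha}^{k}\|$ in both branches (using the error condition of $\mathcal{C}$ together with the triangle inequality for the non-trivial branch). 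The continuity condition follows by combining the continuity of $f$ with the lower semi-continuity of $g$, using that $\Phi(\bm{\alpha}^{k})$ is a monotone sequence with a finite limit.

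The main obstacle, I expect, is the bookkeeping for the two-branch selector $\mathcal{C}$: in particular, ensuring that the relative-error bound $\|\zeta^{k+1}\|\le b\|\bm{\alpha}^{k+1}-\bm{\alpha}^{k}\|$ survives the branch where $\mathbf{w}^{k+1}=\bm{\beta}^{k+1}$, where one must invoke the checking inequality~\eqref{eq:error} in reverse to control $\|\bm{\beta}^{k+1}-\bm{\alpha}^{k}\|$ by a constant times $\|\bm{\alpha}^{k+1}-\bm{\alpha}^{k}\|$ (modulo the deep-network output $\mathbf{v}^{k+1}$). Once the three hypotheses are in place, the standard KL argument---passing to a concave desingularizing function $\phi$, summing $\phi(\Phi(\bm{\alpha}^{k})-\Phi(\bm{\alpha}^{\ast}))$ differences, and applying the relative-error bound---yields $\sum_{k}\|\bm{\alpha}^{k+1}-\bm{\alpha}^{k}\|<\infty$, hence $\{\bm{\alpha}^{k}\}$ is Cauchy and converges to a critical point of $\Phi$.
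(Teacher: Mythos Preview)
Your proposal follows the same overall strategy as the paper: establish monotone descent via Propositions~\ref{prop:c-error} and~\ref{prop:pg}, telescope to get the first bullet, then invoke the K{\L} machinery for global convergence. The first bullet and the continuity/subgradient ingredients are handled essentially identically.

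The one point of divergence is precisely the obstacle you flag. You try to fit the iteration into the standard Attouch--Bolte--Svaiter template with step quantity $\|\bm{\alpha}^{k+1}-\bm{\alpha}^{k}\|$, which requires $\mathtt{dist}(0,\partial\Phi(\bm{\alpha}^{k+1}))\le b\|\bm{\alpha}^{k+1}-\bm{\alpha}^{k}\|$. In the non-trivial branch $\mathbf{w}^{k+1}=\bm{\beta}^{k+1}$ this reduces to bounding $\|\bm{\alpha}^{k+1}-\bm{\beta}^{k+1}\|$ by a constant times $\|\bm{\alpha}^{k+1}-\bm{\alpha}^{k}\|$, and your proposed fix via the checking inequality~\eqref{eq:error} does not deliver this: that inequality relates only $\mathbf{v}^{k+1}$, $\bm{\beta}^{k+1}$, and $\bm{\alpha}^{k}$, with $\bm{\alpha}^{k+1}$ nowhere in sight, so there is no handle to ``reverse'' it into a bound involving $\|\bm{\alpha}^{k+1}-\bm{\alpha}^{k}\|$.

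The paper sidesteps this entirely by never collapsing to $\|\bm{\alpha}^{k+1}-\bm{\alpha}^{k}\|$. It runs the K{\L} argument with the auxiliary increment $\|\bm{\alpha}^{k+1}-\mathbf{w}^{k+1}\|$ as the step quantity: the subgradient bound $\|\partial\Phi(\bm{\alpha}^{k+1})\|\le(L_f+1/\eta_2)\|\bm{\alpha}^{k+1}-\mathbf{w}^{k+1}\|$ comes directly from the optimality condition of the proximal step, and the sufficient decrease $\Phi(\bm{\alpha}^{k})-\Phi(\bm{\alpha}^{k+1})\ge(\tfrac{1}{2\eta_2}-\tfrac{L_f}{2})\|\bm{\alpha}^{k+1}-\mathbf{w}^{k+1}\|^2$ is Proposition~\ref{prop:pg}. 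This yields $\sum_k\|\bm{\alpha}^{k+1}-\mathbf{w}^{k+1}\|<\infty$; a separate telescoping of Proposition~\ref{prop:c-error} controls $\sum_k\|\mathbf{w}^{k+1}-\bm{\alpha}^{k}\|$, and the triangle inequality then gives $\sum_k\|\bm{\alpha}^{k+1}-\bm{\alpha}^{k}\|<\infty$. In short: rather than forcing the two-branch bookkeeping into the single-sequence template, keep $\mathbf{w}^{k}$ as an explicit auxiliary sequence in the K{\L} step and combine afterwards.
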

	
	\begin{proof}
		We first verify that square summable of $\left\{\bm{\alpha}^k -\bm{w}^{k} \right\}_{k\in\mathbb{N}}$ is bounded. From Propositions~\ref{prop:c-error} and \ref{prop:pg}, we deduce the 
		$$
		\Phi(\bm{\alpha}^{k+1}) \leq \Phi(\mathbf{w}^{k+1}) \leq \Phi(\bm{\alpha}^{k})  \leq \Phi(\mathbf{w}^{k}) \leq \Phi(\bm{\alpha}^{0}),
		$$
		is established. It follows that both sequences $\left\{ \Phi(\bm{\alpha}^{k}) \right\}_{k\in\mathbb{N}}$ and $\left\{\Phi(\mathbf{w}^{k})\right\}_{k\in\mathbb{N}}$ are non-increasing. Then, since both $f$ and $g$ are proper, we have $\Phi$ is bounded and that the objective sequences $\left\{ \Phi(\bm{\alpha}^{k}) \right\}_{k\in\mathbb{N}}$ and $\left\{\Phi(\mathbf{w}^{k})\right\}_{k\in\mathbb{N}}$ converge to the same value $\Phi^{*}$, i.e.,
		$$
		\lim\limits_{k \to \infty} \Phi (\bm{\alpha}^{k} ) = \lim\limits_{k \to \infty} \Phi (\mathbf{w}^{k} ) = \Phi^{*}.
		$$
		Moreover, using the assumption $\Phi$ is coercive, we have that both $\left\{\bm{\alpha}^k \right\}_{k\in\mathbb{N}}$ and $\left\{\mathbf{w}^{k} \right\}_{k\in\mathbb{N}}$ are bounded and thus have accumulation points.
		Considering Eq.~\eqref{eq:ineq_fun_pg} and the relationship of $\Phi(\bm{\alpha})$ and $\Phi(\mathbf{w})$, we get for any $k\geq0$,
		\begin{equation}
		\begin{split}
		(\frac{1}{2\eta_{2}}-\frac{L_f}{2})\|\bm{\alpha}^{k+1}-\mathbf{w}^{k+1}\|^2 &\leq \Phi(\mathbf{w}^{k+1}) - \Phi(\bm{\alpha}^{k+1}) \\
		&\leq \Phi(\bm{\alpha}^{k}) -\Phi(\bm{\alpha}^{k+1}). \label{eq:phi_k_k+1}
		\end{split}
		\end{equation}
		Summing over $k$, we further have
		\begin{equation}
		(\frac{1}{2\eta_{2}}-\frac{L_f}{2}) \sum\limits_{k=0}^{\infty} \|\bm{\alpha}^{k+1}-\mathbf{w}^{k+1}\|^{2} \leq \Phi(\bm{\alpha}^{0}) -\Phi^{*}<\infty.\label{eq:ineq_phi}
		\end{equation}
		So far, the first assertion holds.
		
		Next, we prove $\lim\limits_{k \to \infty} \bm{\alpha}^{k} = \bm{\alpha}^{*}$ is a critical point of $\Phi$. Eq.~\eqref{eq:ineq_phi} implies that $\|\bm{\alpha}^{k+1}-\mathbf{w}^{k+1}\| \to 0$ when $k \to \infty$, i.e., there exist subsequences $\{\mathbf{w}^{k} \}_{k_j\in\mathbb{N}}$ and $\{\bm{\alpha}^{k} \}_{k_j\in\mathbb{N}}$ such that they share the same accumulation point $\bm{\alpha}^{*}$ as $j\to \infty$.
		Then by Eq.~\eqref{eq:min_func_pg}, we have 
		$$
		\begin{aligned}		
		&\lambda\| \bm{\alpha}^{k+1}\|_{p} \! +\!  \frac{1}{2\eta_{2}}\|\bm{\alpha}^{k+1}\!  -\! \mathbf{w}^{k+1}  \|^{2} \! +\!  \left\langle \bm{\alpha}^{k+1} \! -\! \mathbf{w}^{k+1} \!  ,\!  \nabla f(\mathbf{w}^{k}) \right\rangle \\	
		&\leq 
		\lambda\| \bm{\alpha}^{*}\|_{p} + \frac{1}{2\eta_{2}}\|\bm{\alpha}^{*} -\mathbf{w}^{k+1}  \|^{2} + \left\langle \bm{\alpha}^{*} -\mathbf{w}^{k+1}  , \nabla f(\mathbf{w}^{k}) \right\rangle.
		\end{aligned}
		$$
		Let $k+1 = k_j$ and $j \to \infty$,  then by taking $\lim\sup$ on the above inequality, we have 
		$$
		\lim\sup\limits_{j\to \infty} \| \bm{\alpha}^{k_j}\|_{p} \leq \| \bm{\alpha}^{*}\|_{p}.
		$$
		What's more, we also get $ \lim\inf\limits_{j\to \infty} \| \bm{\alpha}^{k_j}\|_{p} \geq \| \bm{\alpha}^{*}\|_{p}$ since $\|\cdot\|_{p}$ is lower semi-continuous. Thus we have $\lim\limits_{j\to \infty} \| \bm{\alpha}^{k_j}\|_{p} = \| \bm{\alpha}^{*}\|_{p}$.
		Considering the continuity of $f$, we have $\lim\limits_{j\to \infty} f(\bm{\alpha}^{k_j}) = f(\bm{\alpha}^{*}). $ Thus, we obtain
		\begin{equation}
		\lim\limits_{j\to \infty}\Phi(\bm{\alpha}^{k_j}) = \lim\limits_{j\to \infty} f(\bm{\alpha}^{k_j}) + \lambda \|\bm{\alpha}^{k_j}\|_{p} = \Phi(\bm{\alpha}^{*}).
		\label{eq:Phi_lim}
		\end{equation}
		By the first-order optimality condition of Eq.~\eqref{eq:min_func_pg} and $k_j = k+1$, we deduce
		$$
		0\in \partial \lambda\|\bm{\alpha}^{k_j}\|_p + \nabla f(\mathbf{w}^{k_j}) + \frac{1}{\eta_{2}} ( \bm{\alpha}^{k_j}-\mathbf{w}^{k_j}).
		$$
		Hence we get
		\begin{equation}
		\begin{array}{l}
		\quad \nabla f(\bm{\alpha}^{k_j})-\nabla f(\mathbf{w}^{k_j}) - \frac{1}{\eta_{2}} ( \bm{\alpha}^{k_j}-\mathbf{w}^{k_j}) \in \partial \Phi(\bm{\alpha}^{k_j}) 
		\Rightarrow \\\| \partial \Phi(\bm{\alpha}^{k_j})\| = \|\nabla f(\bm{\alpha}^{k_j})-\nabla f(\mathbf{w}^{k_j}) - \frac{1}{\eta_{2}} ( \bm{\alpha}^{k_j}-\mathbf{w}^{k_j}) \| \\
		\qquad \qquad  \quad  \leq (L_f + \frac{1}{\eta_{2}}) \|\bm{\alpha}^{k_j}-\mathbf{w}^{k_j}\|.
		\end{array}\label{eq:sub-diff_phi}
		\end{equation}
		Using the sub-differential of $\Phi$ and Eqs.~\eqref{eq:Phi_lim}, ~\eqref{eq:sub-diff_phi}, we finally deduce that $0 \in \Phi(\bm{\alpha}^{*})$, which means that $\{\bm{\alpha}^{k}\}_{k \in \mathbb{N}}$ is subsequence convergence.
		
		Furthermore, we will prove that $\{\bm{\alpha}^{k}\}_{k \in \mathbb{N}}$ is sequence convergence. Since $\Phi$ is a K{\L} function, we have
		$$
		\varphi'(\Phi(\bm{\alpha}^{k+1}) -\Phi(\bm{\alpha}^*) ) \mathtt{dist}(0, \partial \Phi(\bm{\alpha}^{k+1})) \geq 1.
		$$
		From Eq.~\eqref{eq:sub-diff_phi} we get that
		$$
		\varphi'(\Phi(\bm{\alpha}^{k+1}) -\Phi(\bm{\alpha}^*) ) \geq \frac{1}{L_f + \frac{1}{\eta_2}}\| \bm{\alpha}^{k_j}-\mathbf{w}^{k_j} \|^{-1}.
		$$
		On the other hand, from the concavity of $\varphi$ and Eq.~\eqref{eq:phi_k_k+1} and ~\eqref{eq:sub-diff_phi}
		we have that 
		$$
		\begin{array}{l}
		\varphi(\Phi(\bm{\alpha}^{k+1}) -\Phi(\bm{\alpha}^*) ) -
		\varphi(\Phi(\bm{\alpha}^{k+2}) -\Phi(\bm{\alpha}^*) ) \\
		\geq \varphi'(\Phi(\bm{\alpha}^{k+1}) -\Phi(\bm{\alpha}^*) ) (\Phi(\bm{\alpha}^{k+1}) - \Phi(\bm{\alpha}^{k+2}) ) \\
		\geq \frac{1}{L_f + \frac{1}{\eta_2}}\| \bm{\alpha}^{k+1}\! -\! \mathbf{w}^{k+1} \|^{-1}
		(\frac{1}{2\eta_2} \! - \! \frac{L_f}{2})
		\|\bm{\alpha}^{k+2} \! -\!  \mathbf{w}^{k+2}\|^{2}.
		\end{array}
		$$
		For convenience, we define for all $m,n \in \mathbb{N}$ and $\bm{\alpha}^{*}$ the following quantities
		$$
		\Delta_{m,n} := \varphi(\Phi(\bm{\alpha}^{m}) -\Phi(\bm{\alpha}^*) ) -
		\varphi(\Phi(\bm{\alpha}^{n}) -\Phi(\bm{\alpha}^*) ) ,
		$$
		and 
		$$
		E:= \frac{2L_f \eta_2+2}{1-L_f \eta_2}.
		$$
		These deduce that 
		$$
		\begin{array}{l}
		\quad \Delta_{k+1,K+2} \geq \frac{\|\bm{\alpha}^{k+2} - \mathbf{w}^{k+2}\|^2}{E \|\bm{\alpha}^{k+1}-\mathbf{w}^{k+1}\| }\\
		\Rightarrow \|\bm{\alpha}^{k+2} - \mathbf{w}^{k+2}\|^2 \leq E \Delta_{k+1,k+2} \|\bm{\alpha}^{k+1}-\mathbf{w}^{k+1}\|\\
		\Rightarrow 2 \|\bm{\alpha}^{k+2} \! -\!  \mathbf{w}^{k+2} \| \leq E \Delta_{k+1,k+2}\!  +\!  \|\bm{\alpha}^{k+1}\! -\! \mathbf{w}^{k+1}\|.
		\end{array}
		$$	
		Summing up the above inequality for $i=l,\dots,k$ yields
		$$
		\begin{aligned}
		&\sum\limits_{i=l+1}^{k} 2 \|\bm{\alpha}^{i+2} -\mathbf{w}^{i+2} \|\\
		&\leq \!\sum\limits_{i=l+1}^{k} \!\| \bm{\alpha}^{i+1}-\mathbf{w}^{i+1} \|+E \sum\limits_{i=l+1}^{k} \Delta_{i+1,i+2} \\ 
		&\leq \!\sum\limits_{i=l+1}^{k} \| \bm{\alpha}^{i+2}-\mathbf{w}^{i+2} \| + \|\bm{\alpha}^{l+2}-\mathbf{w}^{l+2} \| + E \Delta_{l+2,k+2},
		\end{aligned}		
		$$
		where the last inequality holds under the fact $\Delta_{m,n}+\Delta_{n,r} = \Delta_{m,r}$ for all $m,n,r \in \mathbb{N}$.
		Since $\varphi>0$, we thus have for any $k>l$ that
		\begin{equation}
		\begin{aligned}
		&\sum\limits_{i=l+1}^{k}  \|\bm{\alpha}^{i+2} - \mathbf{w}^{i+2} \|
		\leq \|\bm{\alpha}^{l+2}-\mathbf{w}^{l+2} \| + E \Delta_{l+2,k+2} \\
		&\leq \|\bm{\alpha}^{l+2}-\mathbf{w}^{l+2} \| + E \varphi(\Phi(\bm{\alpha}^{l+2}) - \Phi(\bm{\alpha}^{*})).
		\label{eq:a2w2}
		\end{aligned}
		\end{equation}
		Moreover, recalling the conclusion in Proposition~\ref{prop:c-error},  we also has
		\begin{equation}
		\begin{split}
		&\!\min\limits_{i}\{C^{i}\}\! \sum\limits_{i=l+1}^{k} \!\| \mathbf{w}^{i+2} \!-\! \bm{\alpha}^{i+1}\|\! \leq\!c \sum\limits_{i=l+1}^{k}\! (\Phi(\mathbf{w}^{i+2})\! -\!\Phi(\bm{\alpha}^{i+1}\!) ) \\
		&\leq  \sum\limits_{i=l+1}^{k} (\Phi(\bm{\alpha}^{i+2}) -\Phi(\bm{\alpha}^{i+1}) ) = \Phi(\bm{\alpha}^{k+2}) - \Phi(\bm{\alpha}^{l+2}).
		\label{eq:w2a1}
		\end{split}
		\end{equation}
		Combing with Eqs.~\eqref{eq:a2w2} and \eqref{eq:w2a1}, we easily deduce
		\begin{equation}
		\sum\limits_{k=1}^{\infty} \|\bm{\alpha}^{k+1} - \bm{\alpha}^{k} \| < \infty.
		\label{eq:cauchy}
		\end{equation}
		It is clear that Eq.~\eqref{eq:cauchy} implies that the sequence $\{ \bm{\alpha}^{k}\}_{k \in \mathbb{N}}$ is a Cauchy sequence and hence is a convergent sequence. 
		So far, the second assertion holds.
		
	\end{proof}
	
	\subsection*{Sampling Masks}
	In experiments we include three common used types of undersampling masks such as the Cartesian pattern in \cite{qu2012undersled}, Radial pattern in \cite{sun2016deep}  and Gaussian mask in \cite{yang2017dagan}. Fig.\ref{mask} gives a visualization of the three kinds of patterns at a unified sampling ratio of 30\%.
	\begin{figure}[!htbp]
		\begin{center}
			\begin{tabular}{c@{\extracolsep{0.8em}}c@{\extracolsep{0.8em}}c}
				\includegraphics[width=.14\textwidth]{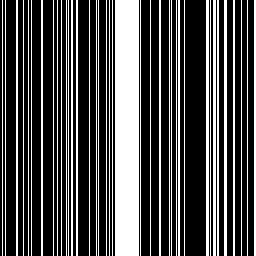}
				&\includegraphics[width=.14\textwidth]{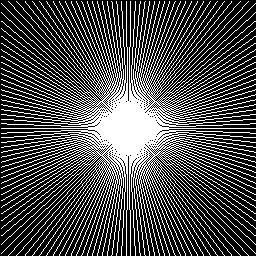}
				&\includegraphics[width=.14\textwidth]{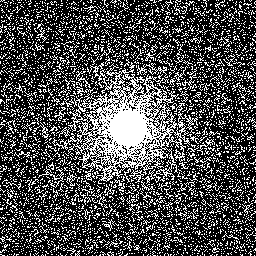}\\
				Cartesian & Radial & Gaussian
			\end{tabular}
		\end{center}
		\caption{Three types of sampling pattern.}
		\label{mask}
	\end{figure}	
	\ifCLASSOPTIONcaptionsoff
	\newpage
	\fi

	\bibliographystyle{IEEEtran}
	\bibliography{IEEEabrv,reference}

\end{document}